\documentclass[11pt]{article}
\usepackage{graphicx}
\usepackage{amsthm}
\usepackage{subcaption}
\newtheorem{lemma}{Lemma}

\usepackage{tikz}
\usepackage{arxiv}

\usepackage{graphicx} % Required for inserting images
\usepackage{parskip} %paragraphs no-indent with line break between

\usepackage{float}
\usepackage[font=small]{caption}
\usepackage{algorithm}
\usepackage{algorithmic}
\usepackage{amssymb} 
\usepackage{amsmath}
\usepackage{booktabs}
\usepackage[pdfencoding=auto, psdextra]{hyperref}
\usepackage{amsthm}
\usepackage{thm-restate}
\newtheorem{theorem}{Theorem}

\newtheorem{definition}{Definition}

\newtheorem{proposition}{Proposition}

\usepackage{tikz}
\usetikzlibrary{decorations.pathmorphing,patterns}
\usetikzlibrary{shadows}
\usetikzlibrary{shapes}
\usetikzlibrary{decorations}
\usetikzlibrary{arrows,decorations.markings} 
\usepackage{csvsimple}
\usepackage{color, colortbl}
\usepackage{array}

\usepackage{verbatim}
\usepackage{parskip}% http://ctan.org/pkg/parskip

\usepackage{float} % Required for the 'H' float option
\usepackage{multirow}
\usepackage{caption}
\usepackage{subcaption}
\usepackage{graphicx}
\usepackage{mathtools}
\usepackage{tikz}
\usetikzlibrary{shapes.geometric}
\usepackage{tikz-3dplot}
\usetikzlibrary{arrows,calc,backgrounds}
\usetikzlibrary{decorations.markings}
\usetikzlibrary{backgrounds}
\usepackage{tikz-3dplot}
\usepackage{bbm}
\usetikzlibrary{decorations.pathmorphing,patterns}
\usetikzlibrary{shadows}
\usetikzlibrary{shapes}
\usetikzlibrary{decorations}
\usetikzlibrary{arrows,decorations.markings} 
\usepackage{csvsimple}

\usepackage{mathtools}
\usepackage{tikz}
\usepackage{tikz-3dplot}
\usetikzlibrary{arrows,calc,backgrounds}
\usetikzlibrary{decorations.markings}
\usepackage{bbm}

\pgfmathsetmacro{\r}{3} %
\pgfmathsetmacro{\ra}{2}
\pgfmathsetmacro{\h}{21/8} %
\tdplotsetmaincoords{20}{20}
\usepackage{todonotes} % for comments

\usepackage{amsthm}

\title{Implicit Midpoint Gradient Descent: Fast and Learning rate free convergence for Zero-Sum Games}

\author{
 Gaoqi Xue\\
  Industrial and Systems Engineering\\
  Rensselaer Polytechnic Institute\\
  \texttt{xueg@rpi.edu}\AND James P. Bailey \\
  Industrial and Systems Engineering\\
  Rensselaer Polytechnic Institute\\
  \texttt{bailej6@rpi.edu} 
  }

\date{}
\begin{document}

\maketitle

\begin{abstract}
We study unconstrained bilinear zero-sum games, a fundamental model in online learning, adversarial optimization, and multi-agent decision-making. We introduce the implicit midpoint gradient descent rule, which we derive from continuous-time follow-the-regularized leader dynamics via symplectic integration methods. We prove that implicit midpoint gradient descent inherits several powerful properties from the continuous-time dynamics, including bounded orbits, fast ergodic convergence to Nash equilibria, and learning-rate-independent stability guarantees. This is the first traditional online optimization approach to simultaneously achieve these properties in unconstrained bilinear zero-sum games. Finally, computational experiments demonstrate that the proposed method significantly outperforms the standard methods, optimistic and alternating gradient descent.
% We propose an implicit update method for two-player linear zero-sum games, derived as a structured discretization of continuous-time learning dynamics. The method can be interpreted as an implicit numerical scheme that better preserves the geometric properties of the underlying game dynamics compared to explicit gradient-based updates. In particular, we analyze the induced discrete-time system and show that it inherits stability properties from the continuous-time limit, including bounded trajectories and convergence to the Nash equilibrium under standard regularity assumptions. We further establish that the convergence behavior matches that of continuous-time Follow-the-Regularized-Leader (FTRL) dynamics, while relaxing step-size restrictions typically required for explicit methods. A detailed two-dimensional analysis is provided to characterize the geometric structure of the iterates. Numerical experiments illustrate the theoretical results and demonstrate improved stability relative to classical first-order methods.

\end{abstract}

\keywords{Zero-sum Games \and Efficient Algorithm for Nash Equilibria \and Online Optimization }

% 90C47, 91A05, 91A26, 68Q32
% 90C47
% Minimax problems in mathematical programming
% 91A05
% 2-person games
% 91A26
% Rationality and learning in game theory
% 68Q32

% Computational learning theory 

\section{Introduction}

Zero-sum games provide a fundamental framework for modeling competitive interactions in which the objectives of two players are perfectly opposed. The mathematical formulation is the saddle-point problem, where one player minimizes a payoff function while the other maximizes it. The study of zero-sum games is grounded in the minimax theorem, first established by von Neumann (1928) \cite{vonNeumann28}, which guarantees the existence of an equilibrium under mild conditions.

The mathematical framework given by zero-sum games has important applications of modeling adversarial decision-making in economics, machine learning, and control systems. In economics, the pioneering work of von Neumann and Morgenstern \cite{vonNeumannTheoryOfGames} established the foundation of modern game theory through the minimax principle, with important applications in market competition, bargaining, auctions, and decision-making under uncertainty, while further developments by Kuhn and Tucker \cite{kuhn1953}, Blackwell and Girshick \cite{blackwell1954}, and Fudenberg and Tirole \cite{fudenberg1991} strengthened its role in economic analysis and strategic behavior. In modern machine learning, Goodfellow et al. \cite{goodfellow2014generative} introduced Generative Adversarial Networks (GANs), one of the most influential applications of zero-sum games, where a generator and a discriminator are trained through a minimax optimization process; this adversarial formulation has become central to deep generative modeling and has significantly advanced image generation, representation learning, and robust optimization \cite{goodfellow2016}.  In control theory and engineering, Isaacs  \cite{isaacs1965} introduced zero-sum differential games as a foundation for robust control. These ideas have major applications in robotics, missile guidance, cyber-physical systems, and autonomous decision systems.

From a computational perspective, solving bilinear zero-sum games reduces to finding a saddle point of a structured objective; first-order methods are widely used due to their simplicity and scalability. Among the classical methods, the Arrow–Hurwicz and Uzawa methods proposed
in \cite{ArrowHurwicz1958}  are the earliest and most fundamental ones, which simultaneously execute gradient descent on the primal minimizing variables and gradient ascent on the dual maximizing variables. We also have Optimistic Gradient Descent (OGD), which improves stability by incorporating predictive correction terms that reduce the cycling behavior common in standard gradient methods, and the extragradient method, which introduces an additional intermediate step to better handle saddle-point structure and is particularly effective for monotone variational inequalities. These two methods are applied to saddle-point problems \cite{mokhtari2020convergence}, achieving the convergence rate O(1/T). Follow-the-Regularized-Leader (FTRL), developed from online learning theory \cite{shalev2006online}, updates strategies by balancing cumulative past payoffs with regularization and is closely connected to no-regret learning and convergence in repeated games. More recently, Implicit Gradient Descent (IGD), proposed by Essid and Tabak \cite{essid2021implicit}, improves upon explicit methods by incorporating the opponent’s anticipated future response, yielding stronger local convergence and a natural transition toward Newton’s method near saddle points.

Beyond standard first-order methods, Bailey and Piliouras \cite{Bailey19Hamiltonian}  introduced a different perspective by studying continuous-time FTRL dynamics through Hamiltonian systems rather than focusing only on equilibrium computation. They showed that in zero-sum games, continuous-time FTRL satisfies Hamilton’s equations and preserves an invariant energy function, similar to conservative physical systems such as planetary motion. This explains why trajectories often do not converge directly to the Nash equilibrium, but instead move along bounded periodic orbits around it.  Motivated by this, we develop a new method by applying the implicit midpoint discretization scheme to the continuous-time FTRL dynamics, which preserves the Hamiltonian structure more accurately, leading to bounded orbits and more stable long-term behavior while maintaining the geometric properties of the original continuous system.

\subsection{Our Contributions}
We propose a new implicit update method for bilinear zero-sum games, named Implicit Midpoint Gradient Descent (\ref{eq:IMGD}), which can be viewed as a principled formulation of implicit gradient descent. Our method is designed to exactly preserve the energy structure of the underlying continuous-time dynamics, providing a key explanation for its stability. We show that the resulting dynamics achieve convergence rates matching those of continuous-time Follow-the-Regularized-Leader (FTRL), establishing a tight correspondence between discrete-time updates and their continuous limit. Importantly, our method does not restrict the learning rate and remains stable for arbitrarily large step sizes. We demonstrate that the proposed method outperforms standard explicit gradient-based approaches, particularly in regimes where oscillations or divergence occur, highlighting the benefits of energy-preserving implicit discretizations in saddle-point optimization. Table~\ref{tab:learning_methods_comparison} demonstrates the benefits of \ref{eq:IMGD} compared to other baseline methods.

\begin{table}[!ht]
\centering
\footnotesize
\renewcommand{\arraystretch}{1.35}
\setlength{\tabcolsep}{5pt}
\setlength{\extrarowheight}{2pt}
\begin{tabular}{p{0.17\linewidth} p{0.15\linewidth} p{0.37\linewidth} p{0.23\linewidth}}
\toprule
\textbf{Method} & \textbf{Update type} & \textbf{Typical guarantee} & \textbf{Learning-rate restriction} \\
\midrule
Continuous-time gradient/FTRL dynamics
& Continuous-time benchmark
& Bounded orbits \cite{Mertikopoulos2018CyclesAdverserial}; \textbf{ergodic convergence at rate \(O(1/(\eta T))\)}
& No stability restriction; \(\eta\) rescales time \\

\addlinespace[0.35em]
Simultaneous GDA
& Explicit Euler
& Last iterate diverges in bilinear games \cite{Bailey18Divergence}; Can recover \(O(1/\sqrt{T})\) average convergence in constrained settings \cite{cesa2006prediction,bailey2019fast}
& Diminishing steps typically required \\

\addlinespace[0.35em]
Optimistic GDA
& Predictive / one-step memory
& \(O(1/T)\) average convergence in bilinear zero-sum games \cite{daskalakis2019last,mokhtari2020unified} 
& Common bound \(\eta < 1/(2L)\) \\

\addlinespace[0.35em]
Alternating GDA
& Symplectic Euler
& Bounded orbits under suitable steps; \(O(1/T)\) time-average convergence \cite{bailey20261}
& Common bound \(\eta < 2/L\) \\

\addlinespace[0.35em]
Implicit GDA / Proximal point
& Implicit Euler
& Damps rotations and can yield last-iterate convergence in linear settings
& No upper stability restriction  \\

\addlinespace[0.35em]
\textbf{IMGD}
& Implicit midpoint 
& \textbf{Preserves distance to every Nash equilibrium; bounded orbits for all \(\eta\); average residual \(O(1/(\eta T)+1/T)\); two-step residual \(O(1/\eta)\)}
& \textbf{No upper bound on \(\eta>0\)} \\
\bottomrule
\end{tabular}
\caption{Comparison of standard learning dynamics for bilinear zero-sum games with payoff matrix $A$. Our contributions are given in bold. Here, \(L=\|A\|\), and the Nash set is assumed nonempty. Guarantees are representative and depend on the precise setting, norm, and performance metric. }
\label{tab:learning_methods_comparison}
\end{table}

\subsection{Outline of Paper}
Section 2 introduces the two-player zero-sum game framework and reviews the associated continuous-time FTRL dynamics, highlighting their Hamiltonian structure. Section 3 presents our proposed discrete-time update rule and establishes its key properties, including bounded orbits and convergence to the Nash equilibrium. Section 4 provides a detailed two-dimensional analysis to illustrate the dynamics and build intuition for the general case. Section 5 reports numerical experiments comparing our method with standard approaches, demonstrating improved stability and performance. Section 6 concludes with a brief discussion on computational cost and implementation considerations.

\section{Preliminaries}

\subsection{Two-Player Zero-Sum Games}

We consider a two-player zero-sum game with strategy variables 
$x_1 \in X_1 \subset \mathbb{R}^{k_1}$ and 
$x_2 \in X_2 \subset \mathbb{R}^{k_2}$. The game is defined by the saddle-point problem
\begin{equation}\label{eq:minimax}
\max_{x_1 \in X_1} \min_{x_2 \in X_2} u_1(x_1,x_2),
\end{equation}
where player 1's utility function is
\[
u_1(x_1,x_2)
=
x_1^\top A x_2
+
b_1^\top x_1
-
b_2^\top x_2,
\]
with 
$A \in \mathbb{R}^{k_1 \times k_2}$, 
$b_1 \in \mathbb{R}^{k_1}$, and 
$b_2 \in \mathbb{R}^{k_2}$ fixed. Since the game is zero-sum, player 2's utility is
\[
u_2(x_1,x_2)
=
- u_1(x_1,x_2)
=
- x_1^\top A x_2
- b_1^\top x_1
+ b_2^\top x_2.
\] 
A pair of strategies $(x_1^\star, x_2^\star) \in X_1 \times X_2$ is called a Nash equilibrium if
\[
u_1(x_1, x_2^\star)
\le
u_1(x_1^\star, x_2^\star)
\le
u_1(x_1^\star, x_2)
\]
for all $x_1 \in X_1$ and $x_2 \in X_2$. In the unconstrained setting ($X_i=\mathbb{R}^{k_i}$), a Nash equilibrium satisfies the first-order optimality conditions
\[
\nabla_{x_1} u_1(x_1^\star,x_2^\star)=0,
\qquad
\nabla_{x_2} u_2(x_1^\star,x_2^\star)=0.
\]
As a result, the standard measure for the distance to the set of Nash equilibria is the $\ell_2$ norm of the gradient. For most of the results in Sections~3--5, we assume that the game admits a Nash equilibrium, a natural condition. The following proposition provides two sufficient conditions under which this assumption is satisfied for the unconstrained bilinear zero-sum game considered in this paper:

\begin{proposition}[Existence of Nash equilibrium]
Consider the unconstrained zero-sum bilinear game
\(
u_1(x_1,x_2)=x_1^\top A x_2+b_1^\top x_1-b_2^\top x_2.
\)
The game admits a Nash equilibrium if either
\begin{enumerate}
    \item $k_1=k_2$ and $A$ is invertible; or
    \item $b_1=b_2=0$.
\end{enumerate}
\end{proposition}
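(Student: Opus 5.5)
The plan is to reduce existence of a Nash equilibrium to solving the first-order system, and then to show that in the bilinear setting every solution of that system really is an equilibrium.

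\textbf{Step 1: the stationarity equations.} We compute
\[
\nabla_{x_1} u_1 = A x_2 + b_1, \qquad \nabla_{x_2} u_2 = -A^\top x_1 + b_2 .
\]
So stationarity is the linear system
\[
A x_2^\star = -b_1, \qquad A^\top x_1^\star = b_2 .
\]

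\textbf{Step 2: stationarity implies equilibrium.} Fix $x_2^\star$. The map $x_1 \mapsto u_1(x_1,x_2^\star) = x_1^\top(Ax_2^\star + b_1) - b_2^\top x_2^\star$ is affine in $x_1$. Its linear coefficient vanishes by the first equation, so it is constant. Hence $u_1(x_1,x_2^\star) = u_1(x_1^\star,x_2^\star)$ for all $x_1$.

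Symmetrically, fix $x_1^\star$. The map $x_2 \mapsto u_1(x_1^\star,x_2) = (A^\top x_1^\star - b_2)^\top x_2 + b_1^\top x_1^\star$ is constant by the second equation. Both defining inequalities of a Nash equilibrium therefore hold with equality. This is the only conceptual point: unbounded affine functions can only be maximized or minimized when they are constant. It is immediate here, and it also shows the two equations are necessary as well as sufficient.

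\textbf{Step 3: solvability in each case.}

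\emph{Case 1.} If $k_1 = k_2$ and $A$ is invertible, then $A^\top$ is also invertible. We take
\[
x_2^\star = -A^{-1} b_1, \qquad x_1^\star = (A^\top)^{-1} b_2 = (A^{-1})^\top b_2 .
\]

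\emph{Case 2.} If $b_1 = b_2 = 0$, both equations are homogeneous. The pair $(x_1^\star,x_2^\star) = (0,0)$ solves them, and any pair with $x_1^\star \in \ker A^\top$ and $x_2^\star \in \ker A$ does as well.

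Combining Step 3 with Step 2 gives a Nash equilibrium in both cases. I do not expect a genuine obstacle; the only care needed is the sign convention for player 2's gradient, which does not affect solvability.
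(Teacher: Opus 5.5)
Your proof is correct and follows the paper's route: you reduce to the linear system $Ax_2^\star=-b_1$, $A^\top x_1^\star=b_2$ and exhibit the same explicit solutions in each case. Your Step~2 is a worthwhile addition, since the paper only states that equilibria satisfy the first-order conditions (necessity) and silently uses the converse; your observation that $x_1\mapsto u_1(x_1,x_2^\star)$ and $x_2\mapsto u_1(x_1^\star,x_2)$ are constant supplies that converse.
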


\begin{proof}
A Nash equilibrium satisfies the first-order conditions
\(
Ax_2^\star+b_1=0,
-A^\top x_1^\star+b_2=0.
\).
If $A$ is invertible, these equations admit the unique solution, $x_2^\star=-A^{-1}b_1,
x_1^\star=(A^\top)^{-1}b_2$.
If $b_1=b_2=0$, the first-order conditions become $Ax_2^\star=0,
-A^\top x_1^\star=0$, 
which are satisfied by $(x_1^\star,x_2^\star)=(0,0)$ and there is at least one Nash equilibrium.
\end{proof}

In this paper, our main results focus on the unconstrained setting, i.e., $X_i=\mathbb{R}^{k_i}$, but much of our analysis extends when $X_i$ is convex and full-dimensional.

\subsection{The Role of the Unconstrained Setting}

In the unconstrained setting, Nash equilibria can be directly computed by solving the linear systems $Ax_2^*+b_1=0$ and $-A^\top x_1^* +b_2=0$. 
Thus, our purpose in studying this setting is not to argue that an iterative learning approach is preferable to direct linear algebra for solving a single game. 
Rather, the unconstrained setting serves as a canonical model for understanding the qualitative behavior of online optimization methods in normal-form (constrained) zero-sum games.

This role is common in the literature; methods such as optimistic gradient descent and alternating gradient descent were first analyzed in unconstrained bilinear games \cite{daskalakis2018limit,mokhtari2020convergence,bailey20261}, where the dynamics of the methods can be isolated, before extending the results to constrained or normal-form settings \cite{daskalakis2019last,nan20251}. 
The reason the model remains informative is that constrained zero-sum dynamics often reduce locally to an unconstrained or affine-constrained system once the active set is identified. 
In particular, for normal-form games, if the support of a Nash equilibrium---the set of strategies with positive weight---is known, then the dynamics restricted to the corresponding face of the simplex evolve on the affine hull of that face. 
In this reduced subspace, the local behavior is determined by the corresponding unconstrained setting. 

As a result, the unconstrained model should be viewed as the basic geometric testbed for local convergence of learning dynamics in the constrained setting. 

\subsection{FTRL Dynamics with Hamiltonian Structure}

To find a Nash equilibrium for \eqref{eq:minimax}, we study dynamics based on the Follow-the-Regularized-Leader (FTRL) algorithm. 
Let \(y_i(t)\) denote the cumulative payoff vector of player \(i\) at time \(t\), and let 
\(h_i : X_i \to \mathbb{R}\) be a strictly convex regularization function. 
In continuous time, the FTRL dynamics take the form
\[
y_i(t)
=
y_i(0)
+
\int_0^t \eta \, \nabla_{x_i} u_i(x(s)) \, ds,
\qquad i = 1,2,
\]
\[
x_i(t)
=
\arg\max_{x_i \in X_i}
\left\{
\langle x_i , y_i(t) \rangle
-
h_i(x_i)
\right\},
\qquad i=1,2.
\]
where \(\eta > 0\) is the learning rate and 
\(x(t) = (x_1(t),x_2(t))\).

% Equivalently, for the zero-sum game considered above,
% \[
% \nabla_{x_1} u_1(x_1,x_2)
% =
% A x_2 + b_1,
% \]
% and
% \[
% \nabla_{x_2} u_2(x_1,x_2)
% =
% A^\top x_1 + b_2.
% \]

% The strategy update is given by

We also define the time-averaged strategies as $
\bar{x}_i(T)
:=
\frac{1}{T}
\int_0^T x_i(t)\,dt$.

For bilinear zero-sum games, the continuous-time FTRL dynamics generate bounded orbits; the trajectories remain confined to a finite region of the state space for all time rather than diverging to infinity. Bailey and Piliouras (2019) \cite{Bailey19Hamiltonian} find that the dynamics on a two-agent zero-sum game are Hamiltonian with an invariant energy function. Therefore,
if the initial strategy is not a Nash equilibrium, the trajectory does not converge to equilibrium but instead stays a fixed positive Bregman distance away and moves around it on a constant-distance orbit.

The notion of a Hamiltonian system originates from classical mechanics and was introduced by William Hamilton to describe the time evolution of physical systems through energy-preserving dynamics.
\begin{definition}[Hamiltonian System]
A dynamical system is called a Hamiltonian system if it can be written in the form
\begin{equation} \label{eq:hamiltonian_system}
\dot{z}(t) = J \nabla H(z(t)),
\end{equation}
where $z(t) \in \mathbb{R}^{2d}$ is the state variable, $H: \mathbb{R}^{2d} \to \mathbb{R}$ is a smooth function called the Hamiltonian (or energy function), and
\[
J = \begin{bmatrix}
0 & I \\
- I & 0
\end{bmatrix}
\]
is the canonical symplectic matrix. 
\end{definition}

Knowing that the continuous-time FTRL dynamics have such nice properties, there exists a simple proof for time-averaged strategies convergent to Nash.

\begin{lemma}\label{lem:ContConverge}
    Suppose $b_i=\vec{0}$, a Nash equilibrium $x^\star$ exists, and that the convex conjugate $h^*_i(y_i)= \max_{x_i \in X_i}\{ \langle x_i, y_i \rangle - h_i(x_i)\}$ is non-negative and has compact level sets. 
    Then $|| \int_{0}^T Ax_2(t)dt / T|| \in O(1/(\eta T))$. 
\end{lemma}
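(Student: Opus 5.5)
The plan is to use the fact that $Ax_2$ is, up to the learning rate, the time derivative of player 1's cumulative payoff, so the time average telescopes. With $b_1=\vec 0$ we have $\nabla_{x_1}u_1(x)=Ax_2$, and the FTRL dynamics give
\[
y_1(T)-y_1(0)=\eta\int_0^T A x_2(t)\,dt .
\]
Hence
\[
\Big\|\tfrac1T\int_0^T Ax_2(t)\,dt\Big\| = \frac{\|y_1(T)-y_1(0)\|}{\eta T}.
\]
The whole statement therefore reduces to showing that $y_1(T)$ stays in a bounded set independent of $T$ (and of $\eta$).

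For boundedness I will use the Hamiltonian/energy structure of \cite{Bailey19Hamiltonian}. The candidate invariant is
\[
H(y_1,y_2)=h_1^*(y_1)+h_2^*(y_2),
\]
possibly shifted by linear terms involving $x^\star$. First, by Danskin's theorem, $\nabla h_i^*(y_i)=x_i(t)$, the FTRL argmax. Second, with $b=0$ we have $\dot y_1=\eta Ax_2$ and $\dot y_2=-\eta A^\top x_1$. Combining these,
\[
\frac{d}{dt}H=\langle x_1,\eta Ax_2\rangle-\langle x_2,\eta A^\top x_1\rangle=0 .
\]
So $H(y(t))=H(y(0))=:c$ for all $t$. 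Since each $h_i^*\ge 0$, it follows that $h_1^*(y_1(t))\le c$. Compactness of the level set $\{h_1^*\le c\}$ then gives $\|y_1(t)\|\le R$ for a constant $R$ depending only on $y(0)$. Consequently $\|y_1(T)-y_1(0)\|\le 2R$, and the average is at most $2R/(\eta T)\in O(1/(\eta T))$.

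The existence of the Nash equilibrium $x^\star$ is not really needed in this argument, because $b=0$ already makes $0$ an equilibrium. I would only mention it to connect the residual $\|Ax_2\|$ with distance to Nash.

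The main technical point I expect is justifying the differentiation $\nabla h_i^*=x_i$ when the $X_i$ are constrained. This requires strict convexity of $h_i$, which gives a unique maximizer and hence differentiability of $h_i^*$. It also requires the maximum to be attained, which must hold for $x_i(t)$ to be defined at all. A secondary check is that the compact-level-set hypothesis is applied to a single conjugate $h_1^*$ rather than to the sum $H$. This works precisely because $h_2^*\ge 0$ lets us drop it. The same argument with the roles of the players swapped bounds $\|\int A^\top x_1/T\|$.
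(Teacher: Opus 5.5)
Your proposal is correct and follows essentially the same route as the paper: you telescope $\eta\int_0^T Ax_2(t)\,dt = y_1(T)-y_1(0)$, then bound $y_1$ uniformly using conservation of $H=h_1^*+h_2^*$, non-negativity of $h_2^*$, and compactness of the level sets of $h_1^*$. The only differences are cosmetic: you verify $\frac{d}{dt}H=0$ directly via $\nabla h_i^*(y_i)=x_i$ instead of citing Bailey and Piliouras, and you correctly bound $\|y_1(T)-y_1(0)\|$ by $2R$ where the paper writes $C$, which does not change the $O(1/(\eta T))$ conclusion.
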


\begin{proof}
    By \cite{Bailey19Hamiltonian}, $H(y(t))= h^*_1(y_1(t)) + h^*_2(y_2(t))$ is constant. Since $h^*_i$, and therefore $H$, is non-negative and has bounded level sets, $||y_i(t)||\leq C$ where $C= \max_{y_i: h^*(y_i) \leq H(y(0))} \lVert y_i \rVert $ .
    Therefore,
    \begin{align*}
        \left\lVert \frac{\int_{0}^T Ax_2(t)dt}{T} \right\rVert &= \frac{ \lVert y_1(T)- y_1(0) \rVert}{\eta T}\leq \frac{C}{\eta T}
    \end{align*}
completing the proof of the lemma. \end{proof}

We remark it is straightforward to extend the result when $b_i\neq \vec{0}$ via the variable substitution introduced in Theorem \ref{thm:bounded}. 
From Lemma \ref{lem:ContConverge}, we know that continuous-time FTRL is highly robust to learning-rate selection, with the learning rate only affecting the speed of convergence but not stability.  Also, the Hamiltonian interpretation of zero-sum game dynamics provides a powerful perspective for algorithm design and analysis.
Since continuous-time FTRL can be viewed as a Hamiltonian system, we can
leverage techniques developed for Hamiltonian dynamics to simulate or discretize these flows efficiently. For instance,
symplectic integrators, which are widely used in Hamiltonian mechanics to preserve geometric structure and energy
over long time horizons, can be adapted to design discrete-time FTRL algorithms that inherit the stability and convergence properties of the continuous system. By combining the geometric insights from Hamiltonian dynamics with the structure of FTRL, we can develop novel algorithms that are both robust to step size choices and faithful to
the underlying game geometry, potentially outperforming classical gradient-based methods in zero-sum games.

\subsection{Gradient Based Methods}\label{sec:GradientBased}

In this section, we study several classical gradient-based methods for online optimization in games. In particular, we view these algorithms as discrete approximations of the continuous FTRL learning dynamics introduced in the previous section. Under this perspective, many characteristic properties of classical methods emerge naturally from the underlying continuous-time learning dynamics they approximate.

Note that in the unconstrained setting \(X_i = \mathbb{R}^{k_i}\), choosing the quadratic regularizer $
h_i(x_i)=\frac12\|x_i\|^2$
makes the maximization problem
\[
x_i(t)=\arg\max_{x_i\in \mathbb{R}^{k_i}}\left\{\langle x_i,y_i(t)\rangle-\frac12\|x_i\|^2\right\}
\]
have the explicit solution
\[
x_i(t)=y_i(t)=x_i(0) + \int_{0}^t \eta \nabla_{x_i} u_i(x(s))ds, 
\qquad i = 1,2,
\]
yielding the continuous-time gradient descent dynamics. In this section, we explore common discrete-time gradient-based methods that can be expressed as an approximation of continuous-time gradient descent.

\paragraph{Gradient Descent (GD).}
In a two-player game, gradient descent methods update each player's strategy by moving in the direction that locally improves their own utility. For the minimax optimization problem~\eqref{eq:minimax}, the update rule is:
\[
x_1^{t+1} = x_1^t + \eta \nabla_{x_1} u_1(x_1^t,x_2^t),
\]
\[
x_2^{t+1} = x_2^t + \eta \nabla_{x_2} u_2(x_1^t,x_2^t).
\]

This update rule can be obtained by applying explicit Euler's method to the continuous-time FTRL-induced gradient flow system \cite{Bailey19Hamiltonian}; it connects gradient-based learning in games to numerical discretizations of continuous-time learning dynamics.

Applying  GDA with a constant step size is known to have last-iterate divergence \cite{Bailey18Divergence,bailey2019fast}. 
This divergence is quickly explained when examining GDA as an approximation of continuous-time gradient descent. Since the continuous-time gradient flow admits a Hamiltonian structure, the total system energy is strictly conserved over time. 
Explicit Euler is well-known to artificially inflate the system's energy (distance to the set of Nash equilibria) in Hamiltonian systems \cite{hairer2006geometric}.

This phenomenon also indicates that the rule fails to achieve last-iterate convergence. Nevertheless, $O(1/\sqrt{T})$ time-average convergence can be recovered by employing decaying learning rates in the constrained setting \cite{cesa2006prediction}. More recently, it has been shown that arbitrary fixed learning rates achieve the same results in 2-agent, 2-strategy games \cite{bailey2019fast}.

\paragraph{Alternating Gradient Descent (AGD).}

For alternating gradient descent, agents update their strategies sequentially: 
\[
x_1^{t+1} = x_1^t + \eta \nabla_{x_1} u_1(x_1^t, x_2^t),
\]
\[
x_2^{t+1} = x_2^t + \eta \nabla_{x_2} u_2(x_1^{t+1}, x_2^t).
\]
This coincides exactly with both the symplectic Euler and Verlet-St\"{o}rmer discretizations of the continuous-time dynamics \cite{bailey20261}. Therefore, AGD can be interpreted as a symplectic integrator for the underlying Hamiltonian dynamics.

Under unconstrained bilinear zero-sum settings, ~\cite{Bailey20Regret} showed that AGD exhibits bounded orbits by exactly preserving a modified discrete invariant energy function at each iteration, where the discretization errors incurred in opposing quadrants analytically counterbalance each other. This is natural, since AGD is obtained via symplectic integration, a technique known to approximately preserve the energy from the continuous system \cite{hairer2006geometric}. 

While the conservation of energy implies AGD does not achieve last-iterate convergence, fast time-average convergence is successfully recovered. The empirical time-averaged strategies 
are guaranteed to converge to the exact Nash Equilibrium at an optimal rate of $O(1/T)$, a property that generalizes to complex multi-agent network zero-sum games~\cite{bailey20261}. \cite{nan20251} extends this analysis to normal-form games (strategies are probability vectors) when there is a fully-mixed Nash equilibrium. 
This conservation of energy can be further exploited to characterize the set of Nash equilibria after a finite number of iterations \cite{kim2025parallelizable}. 

Furthermore, AGD offers an advantage regarding its learning rate tolerance~\cite{bailey20261}. It remains structurally stable under a much larger step size than standard anticipatory updates, satisfying the upper bound $\eta < \frac{2}{\|A\|}$; in contrast, the other standard gradient-based method, optimistic gradient descent, requires much smaller learning rates $\eta < \frac{1}{2\|A\|}$.

\paragraph{Implicit Gradient Descent (IGD).}

Implicit gradient methods evaluate the utility gradients at the next iterate rather than the current one. The update rule is given by
\[
x_1^{t+1}
=
x_1^t
+
\eta \nabla_{x_1} u_1(x_1^{t+1},x_2^{t+1}),
\]
\[
x_2^{t+1}
=
x_2^t
+
\eta \nabla_{x_2} u_2(x_1^{t+1},x_2^{t+1}).
\]

This method can be interpreted as an implicit Euler discretization of the continuous-time gradient flow system.
\cite{essid2021implicit} referred to this approach as \emph{implicit twisted gradient descent} and studied it in the context of minimax optimization problems. The implicit update can be written compactly as
\[
z^{t+1}=z^t-\eta JF(z^{t+1}),
\]
with  
\[
z=\begin{pmatrix}x_1\\x_2\end{pmatrix}, \qquad
F(z)=\begin{pmatrix}-\nabla_{x_1} u_1\\ -\nabla_{x_2}u_1\end{pmatrix},  \qquad
J=
\begin{pmatrix}
I & 0\\
0 & -I
\end{pmatrix}.
\]
They performed a first-order Taylor expansion of the future gradient:
\[
F(z^{t+1})
\approx
F(z^t)+H(z^t)(z^{t+1}-z^t),
\]
where \(H\) denotes the Hessian matrix of the payoff function. This leads to the practical update rule
\[
z^{t+1}
=
z^t-\eta (J+\eta H(z^t))^{-1}F(z^t).
\]
The resulting method interpolates between explicit gradient descent and Newton's method: for small learning rates it behaves similarly to standard gradient descent/ascent, while for large learning rates it approaches a Newton-type iteration. Furthermore, Essid et al.~\cite{essid2021implicit} proved that the algorithm converges locally in a neighborhood of a strict local minimax point. To prevent convergence to critical points of the wrong type, they also proposed an adaptive learning-rate strategy together with acceptance conditions tailored to the minimax structure.

\section{Implicit Midpoint Gradient Descent}

We propose a new rule based on the geometric insights from the continuous-time Hamiltonian dynamics.  We call the rule Implicit Midpoint Gradient Descent (IMGD) since it's obtained through the implicit midpoint integration method, which is a one-step, second-order accurate numerical integrator for ordinary differential equations, and the method is a special case of a \emph{symplectic Runge-Kutta} method, meaning that it approximately preserves the geometric structure of Hamiltonian systems. We show this new rule obtains bounded orbits and time-average convergence to the set of Nash equilibria with arbitrary learning rates.

% \subsection{Implicit Midpoint Gradient Descent (IMGD) }  

Our proposed rule, IMGD, can be viewed as an averaging of GDA (explicit Euler) and IGD (implicit Euler).
We express this update rule in three ways: (\ref{eq:IMGD}) is written consistently with the notation used to approximate dynamics as given in Section \ref{sec:GradientBased}. 
(\ref{eqn:StackedIMGD}) is expressed in stacked notation, which will significantly improve the readability of our analysis. 
Finally, the expression in Proposition \ref{prop:Half}, shows that (\ref{eq:IMGD}) can be expressed as two half-steps followed by a normalization factor to maintain bounded orbits. 
This formulation is useful in practical settings as it expresses (\ref{eq:IMGD}) as a sequence of separable updates based only on local information available to each agent, i.e., agents do not have to share information when updating. 

\[
 x_i^{t+1}= x_i^t + \eta \frac{\nabla_{x_i}u_i(x^{t+1}) + \nabla_{x_i}u_i(x^{t})}{2}, \qquad i=1,2 \tag{IMGD}\label{eq:IMGD}
\]

To simplify our analysis, we also define our rule using stacked-variable notation and gradient vector field:
\[
x=
\begin{pmatrix}
x_1\\
x_2
\end{pmatrix}, \qquad
\begin{pmatrix}
\nabla_{x_1}u_1(x)\\
\nabla_{x_2}u_2(x)
\end{pmatrix}=Gx+c,\qquad G=
\begin{pmatrix}
0 & A\\
-A^\top & 0
\end{pmatrix},
\qquad
c=
\begin{pmatrix}
b_1\\
b_2
\end{pmatrix}.
\]
yielding
\begin{equation}\label{eqn:StackedIMGD}
\begin{aligned}
x^{t+1}
&=
x^t
+
\frac{\eta}{2}
\Bigl(
(Gx^{t+1}+c)
+
(Gx^t+c)
\Bigr)\\
&=
\left(I-\frac{\eta}{2}G\right)^{-1}
\left(
\left(I+\frac{\eta}{2}G\right)x^t
+\eta c
\right)\\
&=\Phi_\eta x^t + D
\end{aligned}\tag{IMGD with Stacked Notation}
\end{equation}
where 
$\Phi_\eta = \big(I - \tfrac{\eta}{2} G\big)^{-1} \big(I + \tfrac{\eta}{2} G\big), \quad D =\big(I - \tfrac{\eta}{2} G\big)^{-1} \eta c. $  In the proof of Proposition \ref{prop:Half}, we establish that $\big(I - \tfrac{\eta}{2} G\big)^{-1}$ is well-defined. 

Finally, we present (\ref{eq:IMGD}) as two half steps followed by a normalization factor. 
This presentation shows that (\ref{eq:IMGD}) can be expressed as a separable update rule. 

\begin{proposition}\label{prop:Half}
    Implicit Midpoint Gradient Descent (IMGD) can be expressed as two half steps followed by a shrinkage operation. 
    Formally, 
\begin{align}
    \tilde{x}_1^{t+1/2}
        &= x_1^t + \frac{\eta}{2} A x_{2}^t + \eta b_1
        \notag\\
    \tilde{x}_2^{t+1/2}
        &= x_2^t - \frac{\eta}{2} A^\top x_{1}^t + \eta b_2
        \tag{Half Step with Full Linear Costs}
        \\[1ex]
    \tilde{x}_1^{t+1}
        &= \tilde{x}_1^{t+1/2} + \frac{\eta}{2} A\tilde{x}_{2}^{t+1/2}
        \notag\\
    \tilde{x}_2^{t+1}
        &= \tilde{x}_2^{t+1/2} - \frac{\eta}{2} A^\top \tilde{x}_{1}^{t+1/2}
        \tag{Half Step without Linear Costs}
        \\[1ex]
    x_1^{t+1}
        &= \left(I+\frac{\eta^2}{4}\,AA^\top\right)^{-1}\tilde{x}_1^{t+1}
        \notag\\
    x_2^{t+1}
        &= \left(I+\frac{\eta^2}{4}\,A^\top A\right)^{-1}\tilde{x}_2^{t+1}
        \tag{Shrinkage Operation to Maintain Bounded Orbits}
\end{align}
\end{proposition}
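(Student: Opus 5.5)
The plan is to work entirely in the stacked notation of (\ref{eqn:StackedIMGD}) and to show that the three displayed operations compose to exactly the affine map $x^t\mapsto \Phi_\eta x^t + D$. The first step is to write each stage as a matrix operation. Let $M_\pm = I \pm \tfrac{\eta}{2}G$. By the definition of $G$ and $c$, the first half step is
\[
\tilde{x}^{t+1/2} = M_+ x^t + \eta c .
\]
The second half step is $\tilde{x}^{t+1} = M_+\tilde{x}^{t+1/2}$, since its block form is $x_1 \mapsto x_1 + \tfrac{\eta}{2}Ax_2$ and $x_2\mapsto x_2 - \tfrac{\eta}{2}A^\top x_1$. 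Composing the two gives
\[
\tilde{x}^{t+1} = M_+^2 x^t + \eta M_+ c .
\]

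The key algebraic identity concerns the shrinkage operator. Since $M_-$ and $M_+$ are polynomials in $G$, they commute, and
\[
M_- M_+ = I - \tfrac{\eta^2}{4}G^2 .
\]
A direct block computation gives
\[
G^2 = \begin{pmatrix} -AA^\top & 0\\ 0 & -A^\top A\end{pmatrix},
\qquad\text{hence}\qquad
M_-M_+ = \begin{pmatrix} I+\tfrac{\eta^2}{4}AA^\top & 0\\ 0 & I+\tfrac{\eta^2}{4}A^\top A\end{pmatrix}.
\]
This is block diagonal and symmetric positive definite, with eigenvalues at least $1$, so it is invertible. The shrinkage step is therefore precisely $x^{t+1} = (M_-M_+)^{-1}\tilde{x}^{t+1}$.

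The same identity settles the well-posedness claim promised after (\ref{eqn:StackedIMGD}). Because $M_-M_+$ is invertible, $M_-$ is invertible, with $M_-^{-1} = M_+(M_-M_+)^{-1}$. One could alternatively argue directly that $G$ is skew-symmetric, so its eigenvalues are purely imaginary and $1-\tfrac{\eta}{2}\lambda\neq 0$.

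It then remains to match the two sides. Using $M_-^{-1}=(M_-M_+)^{-1}M_+$ and the fact that all these matrices commute, we get
\[
\Phi_\eta = M_-^{-1}M_+ = (M_-M_+)^{-1}M_+^2,
\qquad
D = M_-^{-1}\eta c = (M_-M_+)^{-1}\eta M_+ c .
\]
Hence
\[
(M_-M_+)^{-1}\tilde{x}^{t+1} = \Phi_\eta x^t + D,
\]
which is (\ref{eqn:StackedIMGD}). Since this form was derived from (\ref{eq:IMGD}) by multiplying through by $M_-$, the two presentations coincide.

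I do not expect a genuine obstacle here. The only points needing care are the sign bookkeeping in the blocks of $G$, in particular that $-\tfrac{\eta}{2}A^\top$ appears in the second component of both half steps, and the observation that the commutativity of $M_\pm$ is what allows the inverse to be moved past $M_+$.
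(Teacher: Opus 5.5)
Your proof is correct. Its skeleton matches the paper's: stacked notation, then composing the two half steps into $M_+^2x^t+\eta M_+c$ followed by the block-diagonal normalization. The key identity, however, is different.

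The paper identifies $\mathrm{diag}\bigl((I+\tfrac{\eta^2}{4}AA^\top)^{-1},(I+\tfrac{\eta^2}{4}A^\top A)^{-1}\bigr)\,(I+\tfrac{\eta}{2}G)$ with $(I-\tfrac{\eta}{2}G)^{-1}$ by invoking the Schur complement block-inversion formula. It gets invertibility of $I-\tfrac{\eta}{2}G$ from positive definiteness of the Schur complement. You instead recognize the normalization matrix as $M_-M_+=I-\tfrac{\eta^2}{4}G^2$, which is block diagonal because $G$ is block off-diagonal, and let the commutativity of $M_\pm$ finish the job.

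Your route is more self-contained. Matching the Schur-complement form of $M_-^{-1}$ against $\mathrm{diag}(\cdot)\,M_+$ block by block silently needs the push-through identity $(I+\tfrac{\eta^2}{4}AA^\top)^{-1}A=A(I+\tfrac{\eta^2}{4}A^\top A)^{-1}$ for the off-diagonal blocks; this is what the paper's ``variable substitutions'' hide. In your argument, that identity is simply the off-diagonal content of $M_+$ commuting with $(M_-M_+)^{-1}$. Your factorization also gives invertibility of $M_-$ at no extra cost and explains why the shrinkage separates across the two players. The paper's route, in return, exhibits the explicit block form of $(I-\tfrac{\eta}{2}G)^{-1}$, which is what one would actually precompute.
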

\begin{proof}
    First, we express the updates in Proposition \ref{prop:Half} with stacked notation. 
    \begin{align*}
        \tilde{x}^{t+1/2} &= \left(I+\frac{\eta}{2}G\right) x^t +\eta c\\
        \tilde{x}^{t+1}&= \left(I+\frac{\eta}{2}G\right) \tilde{x}^{t+1/2}\\
        &= \left[\begin{array}{c c}I & \frac{\eta}{2}A\\ -\frac{\eta}{2}A^\top & I\end{array} \right] \tilde{x}^{t+1/2}\\
        x^{t+1} & = \left[\begin{array}{c c}(I + \frac{\eta^2}{4}AA^\top)^{-1} & 0\\ 0& (I + \frac{\eta^2}{4}A^\top A)^{-1} \end{array} \right] \tilde{x}^{t+1}
    \end{align*}
Applying variable substitutions and Schur's complement inversion formula yields 
\begin{align*}
    x^{t+1} & = \left[\begin{array}{c c}(I + \frac{\eta^2}{4}AA^\top)^{-1} & 0\\ 0& (I + \frac{\eta^2}{4}A^\top A)^{-1} \end{array} \right] \left[\begin{array}{c c}I & \frac{\eta}{2}A\\ -\frac{\eta}{2}A^\top & I\end{array} \right]\left( \left(I+\frac{\eta}{2}G\right) x^t +\eta c \right) \\
    &=\left[\begin{array}{c c}I & -\frac{\eta}{2}A\\ \frac{\eta}{2}A^\top & I\end{array} \right]^{-1}\left( \left(I+\frac{\eta}{2}G\right) x^t +\eta c \right) \tag{Schur's Complement Inversion Formula}\\
    &=\left( I - \frac{\eta}{2} G\right)^{-1}\left( \left(I+\frac{\eta}{2}G\right) x^t +\eta c \right)
\end{align*}
yielding (\ref{eqn:StackedIMGD}).
We remark that $(I+\frac{\eta^2}{4}AA^\top)^{-1}$ is well-defined since $I+\frac{\eta^2}{4}AA^\top$ is positive definite. 
Similarly, $(I+\frac{\eta^2}{4}A^\top A)^{-1}$ and, by Schur's Complement Inversion Formula, $(I-\frac{\eta}{2}G)^{-1}$ are well-defined. 
\end{proof}

% \begin{proposition}
%  $I-\frac{\eta}{2}G$ is invertible. 

% \end{proposition}
% \begin{proof}

%  By direct computation,
% \[
% I-\frac{\eta}{2}G
% =
% \begin{pmatrix}
% I & -\frac{\eta}{2}A\\
% \frac{\eta}{2}A^\top & I
% \end{pmatrix}.
% \]

% We apply the Schur complement criterion. Since the upper-left block $I$
% is invertible, the above block matrix is invertible if and only if its
% Schur complement
% \[
% I - \left(\frac{\eta}{2}A^\top\right)\left(-\frac{\eta}{2}A\right)
% =
% I + \frac{\eta^2}{4} A^\top A
% \]
% is invertible.

% Observe that $A^\top A$ is symmetric positive semidefinite. Hence, for
% any nonzero vector $x$,
% \[
% x^\top\!\left(I + \frac{\eta^2}{4} A^\top A\right)x
% =
% \|x\|^2 + \frac{\eta^2}{4}\|Ax\|^2
% > 0.
% \]
% Therefore,
% \[
% I + \frac{\eta^2}{4} A^\top A
% \]
% is symmetric positive definite and thus invertible.

% By the Schur complement argument, this implies that
% \[
% I-\frac{\eta}{2}G
% \]
% is invertible.   
    
% \end{proof}

\subsection{Fixed Points for \ref{eq:IMGD}}

First, we define the fixed points for (\ref{eq:IMGD}), and we show they are equivalent to the set of Nash equilibria, which is an essential property for any learning algorithm. 

\begin{proposition}
A point \(x^\star\) is a fixed point of the  \ref{eq:IMGD} iteration if and only if it is a Nash equilibrium of the underlying linear zero-sum game. Equivalently, $x^\star=\Phi_\eta x^\star+D$ if and only if $Gx^\star+c=0$.
\end{proposition}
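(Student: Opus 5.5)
The plan is to work directly with the stacked form (\ref{eqn:StackedIMGD}) and use that $I-\frac{\eta}{2}G$ is invertible, which was established in the proof of Proposition~\ref{prop:Half}. First, multiply the fixed-point equation $x^\star=\Phi_\eta x^\star + D$ on the left by $I-\frac{\eta}{2}G$. Since this matrix is invertible, the fixed-point condition is equivalent to
\[
\left(I-\tfrac{\eta}{2}G\right)x^\star=\left(I+\tfrac{\eta}{2}G\right)x^\star+\eta c .
\]
Cancelling $x^\star$ from both sides reduces this to $-\frac{\eta}{2}Gx^\star=\frac{\eta}{2}Gx^\star+\eta c$, that is, $\eta(Gx^\star+c)=0$. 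Because $\eta>0$, this holds if and only if $Gx^\star+c=0$.

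Equivalently, one can argue from (\ref{eq:IMGD}) itself. Setting $x^{t+1}=x^t=x^\star$ gives $0=\eta\,(Gx^\star+c)$. Conversely, if $Gx^\star+c=0$ and $x^t=x^\star$, then $x^{t+1}=x^\star$ solves the implicit equation, and the solution is unique because of invertibility.

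It remains to connect $Gx^\star+c=0$ with the Nash equilibrium condition. By definition of $G$ and $c$, the equation $Gx^\star+c=0$ says exactly that $\nabla_{x_1}u_1(x^\star)=0$ and $\nabla_{x_2}u_2(x^\star)=0$. These are the first-order conditions stated in the preliminaries.

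To make the equivalence airtight, I will also note sufficiency. The map $u_1(\cdot,x_2^\star)$ is affine in $x_1$, so a zero gradient means it is constant in $x_1$. This gives $u_1(x_1,x_2^\star)=u_1(x_1^\star,x_2^\star)$ for every $x_1$. The same argument applied to $u_1(x_1^\star,\cdot)$ gives the other inequality, so the Nash inequalities hold, with equality.

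For necessity, suppose $\nabla_{x_1}u_1(x^\star)\neq 0$. Then, since $u_1(\cdot,x_2^\star)$ is affine, it is unbounded above over $\mathbb{R}^{k_1}$, which contradicts the equilibrium inequality. The same argument handles player 2.

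There is no real obstacle here. The only point that needs care is justifying the multiplication by $I-\frac{\eta}{2}G$ (and the uniqueness of the implicit step), and for that I cite the invertibility shown via the Schur complement in Proposition~\ref{prop:Half}.
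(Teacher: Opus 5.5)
Your proposal is correct and follows the paper's proof: multiplying the fixed-point equation by the invertible matrix $I-\frac{\eta}{2}G$ reduces it to $\eta(Gx^\star+c)=0$, and $\eta>0$ gives the equivalence. Your extra check that $Gx^\star+c=0$ is both necessary and sufficient for the Nash inequalities (using that $u_1$ is affine in each player's block) makes explicit a link the paper leaves implicit, since its preliminaries only state the first-order conditions as necessary.
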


\begin{proof}
Suppose \(x^\star\) is a fixed point of the \ref{eq:IMGD} iteration and $x^\star=\Phi_\eta x^\star+D$. Expanding with the definitions of $\Phi_\eta$ and $D$ yields
\begin{align*}
&x^\star
=
\left(I-\frac{\eta}{2}G\right)^{-1}
\left[
\left(I+\frac{\eta}{2}G\right)x^\star+\eta c
\right]\\
\Leftrightarrow\  & 
\left(I-\frac{\eta}{2}G\right)x^\star
=
\left(I+\frac{\eta}{2}G\right)x^\star+\eta c\\
\Leftrightarrow\ & -\eta Gx^\star=\eta c\\
\Leftrightarrow\ & Gx^\star+c=0
\end{align*}
and \(x^\star\) satisfies the Nash equilibrium condition. The reverse direction holds identically.
\end{proof}

\subsection{Obtaining \ref{eq:IMGD} via Symplectic Integration}
Next, we show this update rule is naturally derived by applying the implicit midpoint integration technique, a known symplectic integrator, to continuous-time gradient descent. 
With symplectic integrators approximately preserving geometric structure and energy in Hamiltonian mechanics over long time horizons when learning rates are sufficiently small \cite{Hairer2006EnergyConserve}, we expect \ref{eq:IMGD} to behave similarly to continuous-GD.
Remarkably, in Section \ref{sec:Preserve}, we will uncover that \ref{eq:IMGD} perfectly preserves energy with arbitrary learning rates.

\begin{proposition}\label{prop:Integrator}
    \ref{eq:IMGD} is obtained by applying the implicit-midpoint rule to the continuous-GD dynamics. 
\end{proposition}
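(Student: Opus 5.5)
The plan is to write the continuous-GD dynamics as an autonomous ODE in stacked form and then apply the implicit-midpoint rule to it directly. With the quadratic regularizer, Section~\ref{sec:GradientBased} shows that continuous-time FTRL reduces to $x_i(t)=x_i(0)+\int_0^t \eta\nabla_{x_i}u_i(x(s))\,ds$. Differentiating gives
\begin{equation*}
\dot x(t) = \eta\bigl(Gx(t)+c\bigr) =: f(x(t)),
\end{equation*}
where $G$ and $c$ are the stacked quantities defined before (\ref{eqn:StackedIMGD}). I will work with unit time step $h=1$, so the learning rate $\eta$ enters only as a time rescaling, as noted after Lemma~\ref{lem:ContConverge}. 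Equivalently, one can take $\dot x = Gx+c$ with step size $h=\eta$; I would remark that both conventions give the same scheme.

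Next I recall the implicit-midpoint rule for $\dot x=f(x)$ with step $h$:
\begin{equation*}
x^{t+1}=x^t+h\,f\!\left(\tfrac{x^t+x^{t+1}}{2}\right).
\end{equation*}
Because $f$ is affine, $f\bigl(\tfrac{x^t+x^{t+1}}{2}\bigr)=\tfrac12\bigl(f(x^t)+f(x^{t+1})\bigr)$. This identity turns the midpoint evaluation into the average of the two endpoint gradients. Substituting $f(x)=\eta(Gx+c)$ with $h=1$ gives
\begin{equation*}
x^{t+1}=x^t+\tfrac{\eta}{2}\bigl((Gx^{t+1}+c)+(Gx^t+c)\bigr).
\end{equation*}
This is exactly the first line of (\ref{eqn:StackedIMGD}). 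Reading it off block by block, using $Gx+c=(\nabla_{x_1}u_1,\nabla_{x_2}u_2)$, recovers (\ref{eq:IMGD}) componentwise.

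The remaining point is well-posedness: the implicit equation should determine a unique $x^{t+1}$. This follows from the invertibility of $I-\tfrac{\eta}{2}G$, which was established in the proof of Proposition~\ref{prop:Half}. The main subtlety is bookkeeping rather than mathematics. One must keep consistent the sign convention for player 2, where $\nabla_{x_2}u_2=-A^\top x_1+b_2$ gives the $-A^\top$ block and $+b_2$, and one must be clear about where $\eta$ sits (in the vector field or in the step). I will state the convention explicitly to avoid ambiguity. Optionally, I would note that implicit midpoint is the one-stage Gauss--Legendre Runge--Kutta method, which justifies calling it symplectic.
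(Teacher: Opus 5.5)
Your proposal is correct and takes essentially the same route as the paper: substitute the affine vector field $Gx+c$ into the implicit-midpoint step of size $\eta$. The only difference is where you stop. The paper rearranges to the closed form $x^{t+1}=(I-\tfrac{\eta}{2}G)^{-1}((I+\tfrac{\eta}{2}G)x^t+\eta c)$, while you use affinity to land directly on the averaged-gradient form, and your explicit bookkeeping of where $\eta$ sits and of the invertibility of $I-\tfrac{\eta}{2}G$ is a harmless refinement.
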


\begin{proof}
The implicit midpoint rule is a second-order implicit numerical integrator for ordinary differential equations $\dot{x}=f(x),$
defined by
\[
x^{t+1}
=
x^t
+
\eta\, f\!\left(\frac{x^t+x^{t+1}}{2}\right).
\]

Substituting the gradient from continuous-GD into the implicit midpoint method yields 
\begin{align*}
&x^{t+1} = x^t + \eta \left( G \frac{x^t + x^{t+1}}{2} + c \right)\\
&x^{t+1} = x^t + \frac{\eta}{2} G x^t + \frac{\eta}{2} G x^{t+1} + \eta c\\
\Leftrightarrow \ \  & \left(I - \frac{\eta}{2} G\right) x^{t+1}
= \left(I + \frac{\eta}{2} G\right) x^t + \eta c.\\
\Leftrightarrow \ \ & x^{t+1}
= \left(I - \frac{\eta}{2} G\right)^{-1}
\Big( \left(I + \frac{\eta}{2} G\right) x^t + \eta c \Big).
\end{align*}

\end{proof}

We remark that the resulting discretization is algebraically identical to the trapezoidal rule for approximating dynamical systems in this (bilinear) setting. 
We prefer the interpretation as the implicit midpoint rule as it is a symplectic integrator and therefore preserves key geometric features of the continuous-time learning dynamics, allowing the discrete-time method to inherit important qualitative properties of continuous-time systems.
The trapezoidal rule is not always a symplectic integrator and therefore is less likely to preserve properties in nonlinear systems. 

% \begin{definition}
%  The implicit midpoint method for a general dynamical system $\dot{x}(t) = f(x(t))$ is defined as\begin{equation} \label{eq:midpoint_general}
% x^{k+1} = x^k + \eta\, f\!\left(\frac{x^k + x^{k+1}}{2}\right).\end{equation}
% \end{definition}

% Specifically, in our linear system it satisfies
% \[
% f\!\left(\frac{x^k + x^{k+1}}{2}\right)
% = \frac{1}{2}\big(f(x^k) + f(x^{k+1})\big),
% \]
% so for update rule \eqref{eq:imgd_update} it yields
% \[
% x^{k+1} = x^k + \eta \cdot \frac{1}{2} \big( f(x^k) + f(x^{k+1}) \big).
% \]
% This shows that the implicit midpoint method averages the vector fields (slopes) used by the explicit Euler method, which evaluates $f(x^k)$, and the implicit Euler method, which evaluates $f(x^{k+1})$.

\subsection{Preserving the Orbits from Continuous-GD}\label{sec:Preserve}

In bilinear zero-sum games, the underlying continuous-time dynamics are governed by a skew-symmetric operator, leading to energy-preserving rotational behavior. Consequently, trajectories remain on level sets of a conserved quantity and are inherently bounded. This suggests that a well-designed discretization should inherit this qualitative property.

\ref{eq:IMGD} achieves this exactly: it preserves a discrete energy function, ensuring that the iterates evolve on invariant level sets and remain bounded for all step sizes. In contrast, alternating gradient descent, which is obtained from the symplectic Verlet-St\"{o}rmer integration technique, only approximately preserves this energy \cite{bailey20261,nan20251}; the AGD update introduces perturbations, which can lead to energy growth and divergence of the iterates when learning rates are too large.

\begin{theorem}\label{thm:bounded}
(\ref{eq:IMGD}) preserves the distance to each Nash equilibrium regardless of learning rate selection.  Formally, suppose there exists a Nash equilibrium $x^\star$,we have  $\lVert x^{t+1} - x^* \rVert = \lVert x^{t} - x^*\rVert$ for all $t$.
\end{theorem}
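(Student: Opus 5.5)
I plan to reduce to the homogeneous case by shifting coordinates to the Nash equilibrium. By the fixed-point proposition above, $x^\star = \Phi_\eta x^\star + D$, equivalently $Gx^\star + c = 0$. Set $e^t := x^t - x^\star$. Subtracting the fixed-point identity from the \ref{eqn:StackedIMGD} update gives
\[
e^{t+1} = \Phi_\eta e^t = \left(I-\tfrac{\eta}{2}G\right)^{-1}\left(I+\tfrac{\eta}{2}G\right)e^t .
\]
So the affine term $D$ (the linear costs $b_1,b_2$) disappears, and it suffices to show that $\Phi_\eta$ is an orthogonal matrix.

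The key structural fact is that $G$ is skew-symmetric, since
\[
G^\top = \begin{pmatrix} 0 & -A \\ A^\top & 0\end{pmatrix} = -G .
\]
Hence $\Phi_\eta$ is the Cayley transform of the skew-symmetric matrix $\tfrac{\eta}{2}G$. Invertibility of $I-\tfrac{\eta}{2}G$ for every $\eta>0$ is already established in the proof of Proposition~\ref{prop:Half}. The same argument applies to $I+\tfrac{\eta}{2}G$, because $I+\tfrac{\eta}{2}G = \left(I-\tfrac{\eta}{2}G\right)^\top$.

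I would then avoid computing $\Phi_\eta^\top\Phi_\eta$ and instead argue directly from the implicit relation, which is cleaner. The update is equivalent to
\[
\left(I-\tfrac{\eta}{2}G\right)e^{t+1} = \left(I+\tfrac{\eta}{2}G\right)e^t ,
\]
that is, $e^{t+1}-e^t = \tfrac{\eta}{2}G\,(e^{t+1}+e^t)$. Take the inner product of both sides with $e^{t+1}+e^t$. The left side is $\|e^{t+1}\|^2-\|e^t\|^2$, since the cross terms cancel. The right side is $\tfrac{\eta}{2}\,w^\top G w$ with $w = e^{t+1}+e^t$, and this vanishes because $G$ is skew-symmetric. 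Therefore $\|x^{t+1}-x^\star\| = \|x^t-x^\star\|$ for every $\eta>0$ and every Nash equilibrium $x^\star$.

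As an alternative check, $\left(I\pm\tfrac{\eta}{2}G\right)$ commute, so
\[
\Phi_\eta^\top\Phi_\eta = \left(I-\tfrac{\eta}{2}G\right)\left(I+\tfrac{\eta}{2}G\right)^{-1}\left(I-\tfrac{\eta}{2}G\right)^{-1}\left(I+\tfrac{\eta}{2}G\right) = I .
\]

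There is no serious obstacle. The only points needing care are, first, that the reduction uses an arbitrary Nash equilibrium, so the conclusion holds for each one, and second, that the invertibility of the Cayley factors holds without any restriction on $\eta$, which is already covered by Proposition~\ref{prop:Half}.
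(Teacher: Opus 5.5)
Your proof is correct. The reduction to $e^t=x^t-x^\star$ matches the paper, but your main step takes a different route. The paper proves that $\Phi_\eta$ is orthogonal by computing $\Phi_\eta^\top\Phi_\eta=I$, which is exactly your ``alternative check''. You instead pair the implicit relation $e^{t+1}-e^t=\tfrac{\eta}{2}G(e^{t+1}+e^t)$ with $e^{t+1}+e^t$ and use $w^\top Gw=0$.

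Your energy identity is the discrete analogue of $\tfrac{d}{dt}\|z\|^2=2z^\top Gz=0$. It manipulates no inverses: invertibility of $I-\tfrac{\eta}{2}G$ is needed only so that the iteration is well defined. It is also the standard mechanism by which the implicit midpoint rule conserves quadratic invariants. It therefore carries over verbatim to the midpoint form $x^{t+1}=x^t+\eta F\bigl(\tfrac{x^t+x^{t+1}}{2}\bigr)$ for any field with $\langle z-x^\star,F(z)\rangle=0$, in line with the paper's stated preference for the midpoint over the trapezoidal interpretation.

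The paper's route isolates the linear-algebraic fact that $\Phi_\eta$ is orthogonal, which Section~4 builds on when it identifies $\Phi_\eta$ with the rotation $R(\theta)$, $\theta=2\arctan(\eta\omega/2)$. Your argument recovers this too, since a linear map that preserves the norm of every vector is orthogonal.
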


\begin{proof}
    First, observe that $\Phi_\eta$ is orthogonal. Formally, since $G^\top=-G$,
    \begin{align*}
        \Phi_\eta^\top \Phi_\eta &= \left(\left(I - \tfrac{\eta}{2} G\right)^{-1}\left(I + \tfrac{\eta}{2} G\right)\right)^\top \left(I - \tfrac{\eta}{2} G\right)^{-1}\left(I + \tfrac{\eta}{2} G\right)\\
        &= \left(I + \tfrac{\eta}{2} G\right)^\top \left(I - \tfrac{\eta}{2} G\right)^{-\top}\left(I - \tfrac{\eta}{2} G\right)^{-1}\left(I + \tfrac{\eta}{2} G\right)\\
        &= \left(I - \tfrac{\eta}{2} G\right) \left(I + \tfrac{\eta}{2} G\right)^{-1}\left(I - \tfrac{\eta}{2} G\right)^{-1}\left(I + \tfrac{\eta}{2} G\right)\\
        &= \left(I - \tfrac{\eta}{2} G\right) \left(I - \tfrac{\eta}{2} G\right)^{-1}\left(I + \tfrac{\eta}{2} G\right)^{-1}\left(I + \tfrac{\eta}{2} G\right)\tag{since $(I+\frac{\eta}{2}G)$ and $(I-\frac{\eta}{2}G)$ commute}\\
        &= I.
    \end{align*}
Next, consider the shifted sequence $z^t=x^t-x^*$ for all $t$. Since $x^*$ is a Nash equilibrium, $x^*=\Phi_\eta x^* + D$, and 
\begin{align*}
    z^{t+1} = x^{t+1}-x^*= (\Phi_\eta x^t +D) - (\Phi_\eta x^* +D)= \Phi_\eta(x^t-x^*)=\Phi_\eta z^t.
\end{align*}
Finally, since $\Phi_\eta$ is orthogonal, 
\begin{align*}
    \lVert x^{t+1}- x^*\rVert = \lVert z^{t+1} \rVert= \lVert \Phi_\eta z^t\rVert = \sqrt{(z^t)^\top \Phi_\eta^\top \Phi_\eta z^t} =  \sqrt{(z^t)^\top  z^t} =\lVert z^t \rVert = \lVert x^t-x^* \rVert
\end{align*}
yielding the statement of the theorem. 
\end{proof}

% The update rule is
% \[
% x^{t+1} = \Phi_\eta x^t + D.
% \]

%  For  a Nash equilibrium        $ x^\star$ satisfying
% \[
% x^\star = \Phi_\eta x^\star + D.
% \]

% Define the shifted variable
% \[
% z^t := x^t - x^\star.
% \]

% Then
% \[
% z^{t+1}
% = x^{t+1} - x^\star
% = \Phi_\eta x^t + D - (\Phi_\eta x^\star + D)
% = \Phi_\eta (x^t - x^\star)
% = \Phi_\eta z^t.
% \]

% We first show that $\Phi_\eta$ is orthogonal. Recall that
% \[
% \Phi_\eta = \left(I - \tfrac{\eta}{2} G\right)^{-1}\left(I + \tfrac{\eta}{2} G\right),
% \]
% where $G^\top = -G$. A direct computation yields
% \[
% \Phi_\eta^\top \Phi_\eta = I.
% \]
% So we have
% \[
% \|z_{t+1}\|^2
% = z_t^\top \Phi_\eta^\top \Phi_\eta z_t
% = \|z_t\|^2.
% \]

% Thus,
% \[
% \|z^t\| = \|z^0\|, \quad \forall t \ge 0,
% \]
% which implies
% \[
% \|x^t - x^\star\| = \|x^0 - x^\star\|.
% \]

\subsection{Convergence}

We now establish the ergodic convergence of the proposed method.

\begin{theorem}\label{thm:converge}
    Suppose Nash equilibrium exists, let $\{x^t\}$ be generated by the implicit midpoint gradient descent rule and consider the time-average of the strategies, $\bar{x}^T := \frac{1}{T} \sum_{t=0}^{T-1} x^t$. Then $\lVert G\bar{x}^T + c \rVert \leq \frac{\lVert x^0-x^*\rVert}{T}\left( \frac{2}{\eta}+\lVert G\rVert \right)$, i.e., implicit midpoint gradient descent achieves $O\left(\frac{1}{\eta T} + \frac{1}{T} \right)$ time-average convergence to the set of Nash equilibria. 
\end{theorem}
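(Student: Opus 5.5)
Telescoping the update rule, as in Lemma~\ref{lem:ContConverge}, should give the bound almost directly. By (\ref{eqn:StackedIMGD}),
\[
x^{t+1}-x^t=\frac{\eta}{2}\Bigl((Gx^{t+1}+c)+(Gx^t+c)\Bigr)
\]
for every $t$. Summing from $t=0$ to $T-1$ gives
\[
x^T-x^0=\frac{\eta}{2}\Bigl(2\sum_{t=0}^{T-1}(Gx^t+c)+(Gx^T+c)-(Gx^0+c)\Bigr)
=\eta T\,(G\bar{x}^T+c)+\frac{\eta}{2}G(x^T-x^0),
\]
because the sum over $t+1$ shifts the index by one, so the two endpoint terms differ from twice the sum over $t=0,\dots,T-1$. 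Solving for the time-average residual yields
\[
G\bar{x}^T+c=\frac{1}{\eta T}(x^T-x^0)-\frac{1}{2T}G(x^T-x^0).
\]

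Next I take norms and apply the triangle inequality:
\[
\lVert G\bar{x}^T+c\rVert\le \frac{\lVert x^T-x^0\rVert}{T}\left(\frac{1}{\eta}+\frac{\lVert G\rVert}{2}\right).
\]
It remains to bound $\lVert x^T-x^0\rVert$. Write $x^T-x^0=(x^T-x^*)-(x^0-x^*)$. Theorem~\ref{thm:bounded} gives $\lVert x^T-x^*\rVert=\lVert x^0-x^*\rVert$, hence $\lVert x^T-x^0\rVert\le 2\lVert x^0-x^*\rVert$. Substituting this produces exactly $\frac{\lVert x^0-x^*\rVert}{T}\left(\frac{2}{\eta}+\lVert G\rVert\right)$, and the $O\left(\frac{1}{\eta T}+\frac{1}{T}\right)$ rate follows because $\lVert x^0-x^*\rVert$ and $\lVert G\rVert=\lVert A\rVert$ are constants.

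The only delicate step is the boundary bookkeeping in the telescoping sum. The midpoint rule averages the gradients at $x^t$ and $x^{t+1}$, so the sum yields the average over $t=0,\dots,T-1$ plus half-weighted endpoint corrections. These corrections must be collected as $\frac{\eta}{2}G(x^T-x^0)$; the constant $c$ cancels in them, which is why no dependence on $c$ appears. This correction is the source of the extra $\lVert G\rVert/T$ term, which is not present in the continuous-time Lemma~\ref{lem:ContConverge}. Everything else reduces to the distance preservation of Theorem~\ref{thm:bounded}.
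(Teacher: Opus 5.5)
Your proof is correct and follows the paper's argument essentially step for step: you telescope the midpoint update, isolate the endpoint correction $\frac{\eta}{2}G(x^T-x^0)$, and bound $\lVert x^T-x^0\rVert\le 2\lVert x^0-x^*\rVert$ via Theorem~\ref{thm:bounded}. The only cosmetic difference is that you keep $c$ inside each gradient term before summing, whereas the paper sums $\frac{\eta}{2}G(x^{t+1}+x^t)+\eta c$ directly; both yield the same identity and the same constant $\frac{2}{\eta}+\lVert G\rVert$.
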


\begin{proof}
Recall from the proof of Proposition \ref{prop:Integrator}, 
\begin{align*}
&\left(I - \frac{\eta}{2} G\right) x^{t+1}
= \left(I + \frac{\eta}{2} G\right) x^t + \eta c\\
\Leftrightarrow\  & x^{t+1} - x^t= \frac{\eta}{2}G(x^{t+1}+x^t) + \eta c.
\end{align*}
Summing this equality over $t=0,\dots,T-1$ yields
\begin{align*}
&\sum_{t=0}^{T-1} (x^{t+1} - x^t)
= \frac{\eta}{2} G \sum_{t=0}^{T-1} (x^{t+1} + x^t)
+ \eta T c\\
\Leftrightarrow \ &x^T - x^0
= \frac{\eta}{2} G \sum_{t=0}^{T-1} (x^{t+1} + x^t)
+ \eta T c.
\end{align*}

Dividing both sides by $\eta T$ gives
\begin{align*}
\frac{x^T - x^0}{\eta T}
&= G  \sum_{t=0}^{T-1} \frac{x^{t+1} + x^t}{2T}
+ c\\
&= G \bar{x}^T + G\frac{x^T-x^0}{2T} +c.
\end{align*}
Finally, by Theorem \ref{thm:bounded}, $\{x^t\}_{t=1}^\infty$ remains equidistant from a Nash equilibrium $x^*$, and $\|x^T - x^0\| = \| (x^T- x^*) - (x^0 - x^*)\| \le \| x^T-x^*\| +\| x^0- x^*\| =  2\| x^0- x^*\|$ implying
\begin{align*}
\lVert G \bar{x}^T + c\rVert
= \left\lVert\frac{x^T - x^0}{\eta T}
- G \cdot \frac{x^T - x^0}{2T}\right\rVert
\leq \frac{\lVert x^T - x^0 \rVert}{\eta T} + \lVert G\rVert   \frac{\lVert x^T - x^0 \rVert }{2T}
\leq \frac{2\| x^0- x^*\|}{\eta T} + \frac{\| x^0- x^*\|\lVert G\rVert }{T}
\end{align*}
completing the proof of the theorem.
\end{proof}

In particular, the learning rate $\eta$ improves the convergence rate in the sense that the leading term scales as $O(1/(\eta T))$ and there is no restriction on the learning rate $\eta > 0$ for stability or convergence of the method. 
However, as $\eta \to \infty$, the rate saturates at $ O\!\left(\frac{1}{T}\right).$
Thus, increasing $\eta$ beyond a certain scale does not further improve the convergence rate in this analysis.

Moreover, when there exist multiple  Nash equilibria, the solution set forms an affine manifold rather than an isolated point. In this setting, convergence can no longer be expected to a unique point. Instead, a natural notion of convergence is to the closest equilibrium within the solution set.

\begin{theorem}
\label{thm:multiple}
    Let ${{\cal X}^\star}= \{x: x : Gx=-c\}$ is the non-empty set of Nash equilibria, $P(x^0)
= \arg\min_{x\in {\cal X}^\star}\|x-x^0\|$.  Then
$\bar{x}^T \to P(x^0)$ as $T\to \infty$.  
\end{theorem}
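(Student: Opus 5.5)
Let $p = P(x^0)$, which is a Nash equilibrium, and work with the shifted sequence $z^t = x^t - p$. From the proof of Theorem~\ref{thm:bounded} we have $z^{t+1} = \Phi_\eta z^t$ with $\Phi_\eta$ orthogonal. Hence $\bar{x}^T - p = \frac{1}{T}\sum_{t=0}^{T-1}\Phi_\eta^t z^0$, and it suffices to show this average tends to $0$.

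\textbf{Step 1: $z^0$ lies in the range of $G$.} The Nash set is ${\cal X}^\star = p + \ker G$, so $P(x^0)$ is the orthogonal projection of $x^0$ onto this affine set. The optimality condition says $x^0 - p \perp \ker G$, so $z^0 \in (\ker G)^\perp$. Since $G^\top = -G$, we have $(\ker G)^\perp = \mathrm{range}(G^\top) = \mathrm{range}(G)$. Therefore $z^0 = G w$ for some $w$.

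\textbf{Step 2: relate $\Phi_\eta$ to $G$ and telescope.} Because all the matrices involved are rational functions of $G$, they commute. We have
\[
\Phi_\eta - I = \left(I-\tfrac{\eta}{2}G\right)^{-1}\left(\left(I+\tfrac{\eta}{2}G\right)-\left(I-\tfrac{\eta}{2}G\right)\right) = \eta\left(I-\tfrac{\eta}{2}G\right)^{-1}G .
\]
Set $v = \frac{1}{\eta}\left(I-\tfrac{\eta}{2}G\right)w$. Then $(\Phi_\eta - I)v = Gw = z^0$. Using commutativity, the sum telescopes:
\[
\sum_{t=0}^{T-1}\Phi_\eta^t z^0 = \sum_{t=0}^{T-1}\Phi_\eta^t(\Phi_\eta - I)v = (\Phi_\eta^T - I)v .
\]
Since $\Phi_\eta$ is orthogonal, the right-hand side has norm at most $2\|v\|$. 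This gives
\[
\|\bar{x}^T - p\| \le \frac{2\|v\|}{T} \to 0 .
\]

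\textbf{Main obstacle.} The key point is Step 1, namely that the projection residual lies in $\mathrm{range}(G)$. This holds because skew-symmetry gives $\mathrm{range}(G) = (\ker G)^\perp$. It is exactly what rules out an invariant component: $\ker(\Phi_\eta - I) = \ker G$, and $z^0$ has no component there.

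An alternative route is spectral. Decompose $z^0$ along the eigenspaces of the normal matrix $\Phi_\eta$. Its eigenvalues are $(1+\tfrac{\eta}{2}i\lambda)/(1-\tfrac{\eta}{2}i\lambda)$, which equal $1$ only when $\lambda = 0$, i.e.\ on $\ker G$. This is the von Neumann mean ergodic theorem applied here. The telescoping argument above is cleaner, though, and it also yields the $O(1/T)$ rate.
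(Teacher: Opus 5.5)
Your proof is correct, but it closes the argument by a different mechanism than the paper.

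The paper's route is geometric. For every $d\in\ker G$, Theorem~\ref{thm:bounded} keeps $x^t$ equidistant from $P(x^0)+d$ and from $P(x^0)-d$. Expanding the squares gives $d^\top x^t=d^\top P(x^0)$. So all iterates, and hence all time averages, lie in the slice $P(x^0)+(\ker G)^\perp$, whose only equilibrium is $P(x^0)$. Convergence then follows from the residual bound of Theorem~\ref{thm:converge}.

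You use the same underlying fact only at time $0$: $z^0\perp\ker G$, i.e.\ $z^0\in\mathrm{range}(G)$ by skew-symmetry. You then telescope the orbit sum through the preimage $z^0=(\Phi_\eta-I)v$. As a result you need neither the equidistance-to-every-equilibrium argument nor Theorem~\ref{thm:converge}.

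Your route buys two things. First, it is self-contained and gives an explicit rate for the distance itself. Take $w=G^{+}z^0$ (Moore--Penrose pseudoinverse) and let $\sigma$ be the smallest nonzero singular value of $G$. Then $v=w/\eta-z^0/2$, so $\|\bar x^T-P(x^0)\|\le \frac{\|z^0\|}{T}\bigl(\frac{2}{\eta\sigma}+1\bigr)$. Completing the paper's route via $\|Gu\|\ge\sigma\|u\|$ on $(\ker G)^\perp$ gives $\|G\|/\sigma$ in place of $1$. Second, it makes explicit the last step: passing from $\|G\bar x^T+c\|\to 0$ to $\bar x^T\to P(x^0)$ inside the slice. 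The appendix proof stops after deriving $d^\top x^t=d^\top P(x^0)$ and leaves that step to prose in the main text.

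The paper's route, in exchange, is more geometric and directly showcases the distance-preservation property that drives the rest of the paper.
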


The proof is identical to the same result for Continuous-time GD in \cite{bailey2026uniqueness}. 
The key to the proof is that, since the strategies remain equidistant to every Nash equilibrium (Theorem \ref{thm:bounded}), both the strategies and the time-average of the strategies are contained in an affine space containing only the equilibrium $P(x^0)$, as depicted in Figure \ref{fig:MultiEquil}. 
Since the time-average of the strategies converges to the set of Nash (Theorem \ref{thm:converge}), and since $P(x^0)$ is the equilibrium contained in an affine space containing the time-average of the strategies, the time-average must converge to $P(x^0)$. 
For completeness, we include a near-verbatim of the proof from \cite{bailey2026uniqueness} in the appendix. 

\begin{figure}[!ht]
\centering
% \resizebox{3in}{!}{%
    \begin{tikzpicture}[
    tdplot_main_coords,
    Helpcircle/.style={gray!70!black,
    },scale=0.5
    ]

    \draw[dashed] (0,-4,0) -- (0,8,0) node[below,black] {Set of Nash Equilibria};

    \coordinate (M1) at (0,0,0); 
    \coordinate (M2) at (0,4,0); 
    \coordinate (A) at (0,\h,0);

    \tdplotsetrotatedcoords{90}{90}{0}%
    \draw[thick, Helpcircle, red, tdplot_rotated_coords,decoration={markings, mark=at position 1 with {\pgftransformscale{3}\arrow{>}}},
            postaction={decorate}] (A) circle[radius=sqrt(\r*\r-\h*\h)];

    \begin{scope}[tdplot_screen_coords, on background layer]
    \fill[ball color= gray!20, opacity = 0.25] (M1) circle (\r); 
    \end{scope}

    \begin{scope}[tdplot_screen_coords, on background layer]
    \fill[ball color= gray!20, opacity = 0.25] (M2) circle (\ra); 
    \end{scope}

    \fill[black] ({sqrt(\r*\r-\h*\h)},\h,0) circle[radius=3pt];
    \node[right,black] at  ({sqrt(\r*\r-\h*\h)},\h,0) {$x^0$};

    \fill[black] (M1) circle[radius=3pt];
    \node[right,black] at  (M1) {$x^*-\lambda \cdot d$};
    
    \fill[black] (M2) circle[radius=3pt];
    \node[right,black] at  (M2) {$x^*+d$};

    \draw[black] (M1)--({sqrt(\r*\r-\h*\h)},\h,0)--(M2);
    
    \fill[black] (A) circle[radius=3pt];
    \node[right,black] at  (A) {${x}^*$};
    \end{tikzpicture}
% }
\caption{Despite non-uniqueness, IMGD contains strategies to a subspace with a unique equilibrium. 
For instance, when using \ref{eq:IMGD} in $\mathbb{R}^3$, agents' strategies remain equidistant from distinct Nash strategies $x^*+d$ and ${x}^*-\lambda \cdot d$ causing agent strategies to cycle around a lower-dimensional circle in $\{x\in \mathbb{R}^3: d^\intercal x = d^\intercal x^*$\} that uniquely contains the unique Nash equilibrium ${x}^*$  that is closest to the initial strategy $x^0$.} \label{fig:MultiEquil}
\end{figure}

\section{Two-Step Averaging and Decomposable Geometries}

With the powerful properties of \eqref{eq:IMGD} established, we now turn our attention to its learning-rate selection and geometric behavior.
Surprisingly, we find that with $\eta\to\infty$, \ref{eq:IMGD} can achieve time-average convergence after one iteration. 

\begin{theorem}\label{thm:2TimeAverage}
    Suppose there exists a Nash equilibrium satisfying $Gx^*+c=0$, $\bar x^2=\frac{x^0+x^1}{2}$, we have $||G\bar{x}^2 +c || \in O(1/\eta)$. I.e., we can accurately estimate the Nash equilibrium after computing a single gradient.
\end{theorem}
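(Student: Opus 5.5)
The plan is to specialize the telescoping identity from the proof of Theorem~\ref{thm:converge} to a single step. From the proof of Proposition~\ref{prop:Integrator},
\[
x^{1}-x^{0}=\tfrac{\eta}{2}G(x^{1}+x^{0})+\eta c .
\]
Dividing by $\eta$ gives the exact identity
\[
G\bar{x}^2+c=\frac{x^1-x^0}{\eta}.
\]
The claim therefore reduces to showing that $\lVert x^1-x^0\rVert$ is bounded independently of $\eta$.

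To get that bound, invoke Theorem~\ref{thm:bounded}. With $x^*$ a Nash equilibrium, $\lVert x^1-x^*\rVert=\lVert x^0-x^*\rVert$ for every $\eta>0$. The triangle inequality then gives $\lVert x^1-x^0\rVert\le 2\lVert x^0-x^*\rVert$, and hence
\[
\lVert G\bar{x}^2+c\rVert\le \frac{2\lVert x^0-x^*\rVert}{\eta}\in O(1/\eta).
\]
The constant depends only on the initial distance to equilibrium, not on $\eta$ or $\lVert G\rVert$.

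Note why Theorem~\ref{thm:converge} cannot simply be applied with $T=2$. Its bound carries the additive term $\lVert G\rVert\lVert x^0-x^*\rVert/T$, which does not vanish as $\eta\to\infty$. That term comes from the mismatch between $\bar x^T$ (which includes $x^0$ but not $x^T$) and the trapezoidal average $\tfrac{1}{2T}\sum_t(x^{t+1}+x^t)$. The point of the proof is that for the two-point average $\bar x^2=(x^0+x^1)/2$ these coincide exactly: the trapezoidal average over one step is precisely $\bar x^2$. So the correction term $G(x^T-x^0)/(2T)$ is absent, and only the $1/\eta$ term survives.

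The main thing to check is that $\bar x^2$ is interpreted as $(x^0+x^1)/2$, as the statement defines it, rather than as the $T=2$ case of $\bar x^T$, which would be the same expression. I will also confirm that orthogonality of $\Phi_\eta$ (Theorem~\ref{thm:bounded}) holds uniformly in $\eta$, which it does since the proof used only $G^\top=-G$. Everything else is a one-line computation.
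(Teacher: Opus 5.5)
Your proof is correct, and it takes a genuinely different and considerably shorter route than the paper.

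The paper argues spectrally. It block-diagonalizes $G$ by the real Schur decomposition (Theorem~\ref{thm:decomposition}). It shows that on each $2\times 2$ block the shifted iterate is rotated by $\theta_j=2\arctan(\eta\omega_j/2)$ (Theorem~\ref{thm:rotation}). It then computes $\|(I+R(\theta_j))/2\|=\cos(\theta_j/2)=(1+\eta^2\omega_j^2/4)^{-1/2}$ (Theorem~\ref{thm:converge2}) and sums the squared block residuals.

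You instead note that the one-step IMGD equation is literally $x^1-x^0=\eta(G\bar x^2+c)$. So the residual at the two-point average equals the step divided by $\eta$, and you bound the step by the diameter $2\|x^0-x^*\|$ of the invariant sphere from Theorem~\ref{thm:bounded}. This buys you several things:
\begin{itemize}
\item no spectral decomposition or per-block bookkeeping;
\item you work directly with the residual, rather than going through the distance bound of Theorem~\ref{thm:converge2} and rescaling by $|\omega_j|$;
\item you get the explicit, spectrum-free constant $2\|x^0-x^*\|/\eta$.
\end{itemize}

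The paper's route buys finer information. Writing $z_j^0$ for the $j$-th block of $Q^\top(x^0-x^*)$, it yields the exact value $\|G\bar x^2+c\|^2=\sum_j \frac{\omega_j^2}{1+\eta^2\omega_j^2/4}\|z_j^0\|^2$. It also gives the iterate-space statement $\|\bar x^2-P(x^0)\|\in O(1/\eta)$ and the geometric mechanism: the rotation angle tends to $\pi$.

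Your only loss relative to this is the triangle inequality. Per block, $\|x^1-x^0\|=2|\sin(\theta_j/2)|\,\|z_j^0\|$, and $|\sin(\theta_j/2)|\to 1$ as $\eta\to\infty$. With $x^*=P(x^0)$, $x^0-x^*$ has no component in $\ker G$, so your bound is asymptotically tight.
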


We prove this by decomposing the update rule into separable 2D dynamics, and base our analysis on these 2D projections.
In Section 4.1, we introduce the decomposition of \ref{eq:IMGD} into separable 2D dynamics; in Section 4.2, we find that \ref{eq:IMGD}  reduces to a planar rotation in 2D; in Section 4.3 we prove the fast convergence behavior in 2D; and in Section 4.4, we extend the result to higher dimensions and give the proof of Theorem \ref{thm:2TimeAverage}.

\subsection{Decomposition of \ref{eq:IMGD} into Separable 2D Dynamics}

First, to decompose a system of higher dimensions into separable 2-dimensional systems, we introduce the canonical form of a real skew-symmetric matrix:

\paragraph{Real Schur decomposition \cite[Theorem 7.4.1]{golu2013matrix}}
Let $G \in \mathbb{R}^{k\times k}$,
$G^\top = -G$, $k=k_1+k_2$.
There exists an orthogonal matrix $Q \in \mathbb{R}^{k\times k}$, i.e., $Q^\top Q = 1$, such that
\[
G = Q J Q^\top ,
\]
where $J$ is block diagonal of the form
\[
J =
\begin{pmatrix}
J(\omega_1) & & & \\
& \ddots & & \\
& & J(\omega_l) & \\
& & & 0
\end{pmatrix},
\quad
J(\omega_j)=
\begin{pmatrix}
0 & \omega_j \\
-\omega_j & 0
\end{pmatrix},
\qquad \omega_j \in \mathbb{R}.
\]

and 
$\lambda = \pm i\omega_j (j=1,\dots,l)
$ are eigenvalues,
zero eigenvalues corresponding to the zero block in $J$.   

The structure of J provides insight into analyzing the dynamics on separable 2 × 2 subspaces. And we find that
\ref{eq:IMGD} dynamics can decompose into separable 2D dynamics:

\begin{theorem}\label{thm:decomposition}
Suppose $G$ has $2l$ non-zero eigenvalues and a  Nash equilibrium \(x^\star\) exists,
then  the sequence $\{x^t\}_{t=1}^{\infty}$ can be decomposed
into $\ell$ separable 2-dimensional dynamics.
Formally, for
$y^t = Q^\top x^t$ where $Q$ is the orthogonal matrix in the real Schur decomposition,
\[y_j^{t+1}=y_j^t
,\quad j>2l;\] 
\begin{align*}
y_{2j-1}^{t+1}
&=
y_{2j-1}^{t}
+
\eta
\left(
\omega_j
\frac{
y_{2j}^{t}+y_{2j}^{t+1}
}{2}
+
\bigl[Q^\top c\bigr]_{2j-1}
\right),
\\
y_{2j}^{t+1}
&=
y_{2j}^{t}
+
\eta
\left(
-\omega_j
\frac{
y_{2j-1}^{t}+y_{2j-1}^{t+1}
}{2}
+
\bigl[Q^\top c\bigr]_{2j}
\right),\quad j=1,\dots,l.  
\end{align*}

I.e.\@, \ref{eq:IMGD} dynamics are equivalent to
$l$ instances of \ref{eq:IMGD} on 1-strategy zero-sum games with payoff matrices
$\begin{bmatrix}
 \omega_j \
\end{bmatrix}
_{j=1}^{\ell}$.
\end{theorem}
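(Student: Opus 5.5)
The plan is to conjugate the stacked form of the update by the orthogonal matrix $Q$ from the real Schur decomposition, so that the skew-symmetric operator $G$ becomes the block-diagonal matrix $J$. The block-diagonal structure should then split the system into independent $2\times 2$ blocks plus a trivial part.

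We start from the identity established in the proof of Proposition~\ref{prop:Integrator} (and reused in Theorem~\ref{thm:converge}):
\[
x^{t+1}-x^t=\frac{\eta}{2}G\left(x^{t+1}+x^t\right)+\eta c .
\]
Multiply on the left by $Q^\top$ and write $G=QJQ^\top$. Since $Q^\top Q=I$, and with $y^t=Q^\top x^t$, this gives
\[
y^{t+1}-y^t=\frac{\eta}{2}J\left(y^{t+1}+y^t\right)+\eta\,Q^\top c .
\]
Because $J$ is block diagonal, we read this equation off coordinate by coordinate. For the $j$-th block $J(\omega_j)$, which acts on coordinates $2j-1$ and $2j$, the first row contributes $\omega_j\,\frac{y_{2j}^t+y_{2j}^{t+1}}{2}$ to coordinate $2j-1$. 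The second row contributes $-\omega_j\,\frac{y_{2j-1}^t+y_{2j-1}^{t+1}}{2}$ to coordinate $2j$. Adding the constant terms $\eta[Q^\top c]_{2j-1}$ and $\eta[Q^\top c]_{2j}$ gives exactly the two displayed equations.

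For the coordinates $j>2l$, which correspond to the zero block, the same computation only gives
\[
y_j^{t+1}=y_j^t+\eta[Q^\top c]_j .
\]
This is the one step that is not pure bookkeeping. To obtain $y_j^{t+1}=y_j^t$ we must show that $[Q^\top c]_j=0$ for $j>2l$, and this is where the existence of a Nash equilibrium enters. From $Gx^\star+c=0$ we get
\[
Q^\top c=-Q^\top G x^\star=-J\,(Q^\top x^\star).
\]
The rows of $J$ with index greater than $2l$ are identically zero. Hence those components of $Q^\top c$ vanish and the trivial coordinates are constant. Without an equilibrium these coordinates would drift linearly, so the hypothesis is genuinely needed here.

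Finally, we verify the interpretation as IMGD on a game with a single strategy per player. Take $k_1=k_2=1$, $A=[\omega_j]$, $b_1=[Q^\top c]_{2j-1}$ and $b_2=[Q^\top c]_{2j}$. Then $G=J(\omega_j)$, and the form \eqref{eqn:StackedIMGD} reproduces the pair of equations above verbatim. The blocks are decoupled because each equation involves only $y_{2j-1}$ and $y_{2j}$.
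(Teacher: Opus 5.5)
Your proposal is correct and follows the paper's proof essentially step for step. Like the paper, you conjugate the midpoint identity by $Q^\top$ to obtain $y^{t+1}-y^t=\frac{\eta}{2}J(y^{t+1}+y^t)+\eta\,Q^\top c$, read off the decoupled $2\times 2$ blocks, and use $Q^\top c=-J(Q^\top x^\star)$ to show that the zero-block coordinates are constant.
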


\begin{proof}
Multiplying $\{x^t\}_{t=1}^\infty$ by $Q^\top$ gives
\begin{align*}
& Q^\top x^{t+1} = Q^\top (x^t + G\frac{x^t + x^{t+1}}{2} + c) \\
\Leftrightarrow \ & 
 Q^\top x^{t+1} =  Q^\top x^t + J Q^\top\frac{x^t + x^{t+1}}{2} + c) \\
\Leftrightarrow \ & y^{t+1}
=
y^t
+
\eta
\left(
J\frac{y^t+y^{t+1}}{2}
+
Q^\top c
\right).
\end{align*}

Since $J$ is block diagonal, the coordinates associated with different blocks are completely decoupled. 

For every zero eigenvalue coordinate $j>2\ell$, the corresponding
diagonal entry of $J$ is zero, so
\[
y_j^{t+1}
=
y_j^t+\eta [Q^\top c]_j.
\]
The Nash equilibrium \(x^\star\) satisfies
$Gx^\star + c = 0$, we have
$c = -Gx^\star = -QJQ^\top x^\star $, so
\[
\bigl[Q^\top c\bigr]_{j} = \bigl[-J(Q^\top x^\star) \bigr]_{j} =0 \Rightarrow y_j^{t+1}
=
y_j^t,\quad \forall j>2l.
\]
For each nonzero block
\[
J(\omega_j)
=
\begin{bmatrix}
0 & \omega_j\\
-\omega_j & 0
\end{bmatrix},
\]
let
\[
z_j^t=
\begin{bmatrix}
y_{2j-1}^t\\
y_{2j}^t
\end{bmatrix},
\qquad
d_j=
\begin{bmatrix}
[Q^\top c]_{2j-1}\\
[Q^\top c]_{2j}
\end{bmatrix}.
\]
Restricting the dynamics to this block yields
\[
z_j^{t+1}
=
z_j^t
+
\eta
\left(
J(\omega_j)
\frac{z_j^t+z_j^{t+1}}{2}
+d_j
\right).
\]

Writing the two coordinates explicitly gives
\begin{align*}
y_{2j-1}^{t+1}
&=
y_{2j-1}^{t}
+
\eta
\left(
\omega_j
\frac{y_{2j}^{t}+y_{2j}^{t+1}}{2}
+
[Q^\top c]_{2j-1}
\right),\\
y_{2j}^{t+1}
&=
y_{2j}^{t}
+
\eta
\left(
-\omega_j
\frac{y_{2j-1}^{t}+y_{2j-1}^{t+1}}{2}
+
[Q^\top c]_{2j}
\right).
\end{align*}
\end{proof}

\subsection{Planar Rotations in 2D}

Now that we have established the system can be decomposed into separable 2D systems, we study the dynamics in 2D.  We begin by establishing that dynamics rotate around the Nash equilibrium with constant velocity that is determined as a function of the learning rate. 

Consider the update rule in Theorem~\ref{thm:decomposition}, in every 2D systems, with $G=
\begin{pmatrix}
0 & \omega\\
-\omega & 0
\end{pmatrix}$, $c=
\begin{pmatrix}
b_1\\
b_2
\end{pmatrix}$ and $D =\big(I - \tfrac{\eta}{2} G\big)^{-1} \eta c$, the \ref{eq:IMGD} update  is 

\begin{equation}
\label{eq:updatein2}x^{t+1} = \Phi_\eta x^t+D, \quad
\Phi_{\eta}
=
\begin{pmatrix}
1 & -\frac{\eta \omega}{2}\\
\frac{\eta \omega}{2} & 1
\end{pmatrix}^{-1}
\begin{pmatrix}
1 & \frac{\eta \omega}{2}\\
-\frac{\eta \omega}{2} & 1
\end{pmatrix}
=\frac{1}{1+\frac{\eta^2 \omega^2}{4}}
\begin{pmatrix}
1-\frac{\eta^2 \omega^2}{4} & \eta \omega\\
-\eta \omega & 1-\frac{\eta^2 \omega^2}{4}
\end{pmatrix}.
\end{equation}

\begin{theorem} \label{thm:rotation}
Suppose $\omega\neq 0$, and there is a unique Nash equilibrium $x^*=(b_2/\omega, -b_1/\omega)$. 
Then $x^{t+1}$ is obtained by rotating $x^t$ around $x^*$ by $\theta = 2\arctan{\frac{\eta \omega}{2}}$.  Formally, $x^{t+1}-x^\star
=
R(\theta)\bigl(x^t-x^\star\bigr)$ where $ R({\theta})
=
\begin{pmatrix}
\cos\theta & -\sin\theta \\
\sin\theta & \cos\theta
\end{pmatrix}.$
\end{theorem}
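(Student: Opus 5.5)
The plan is to reduce the claim to an identity between $\Phi_\eta$ and a rotation matrix. Then I verify that identity with the tangent half-angle formulas, keeping careful track of the sign of the off-diagonal entries, since those fix the orientation.

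\textbf{Step 1 (centering).} By the fixed-point proposition, $x^\star$ is a fixed point of \ref{eq:IMGD}. I first check directly that $x^\star=(b_2/\omega,-b_1/\omega)$ satisfies $Gx^\star+c=0$; for example, the first component is $\omega(-b_1/\omega)+b_1=0$. So $x^\star=\Phi_\eta x^\star+D$. Subtracting this from \eqref{eq:updatein2} gives $x^{t+1}-x^\star=\Phi_\eta(x^t-x^\star)$, exactly as in the proof of Theorem~\ref{thm:bounded}. Uniqueness of $x^\star$ follows from invertibility of $G$ when $\omega\neq 0$. It remains to identify $\Phi_\eta$ with the rotation.

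\textbf{Step 2 (half-angle identification).} Set $s=\frac{\eta\omega}{2}=\tan(\theta/2)$, which defines $\theta=2\arctan s\in(-\pi,\pi)$. The tangent half-angle formulas give
\[
\cos\theta=\frac{1-s^2}{1+s^2},\qquad \sin\theta=\frac{2s}{1+s^2}=\frac{\eta\omega}{1+\frac{\eta^2\omega^2}{4}} .
\]
Substituting these into the closed form \eqref{eq:updatein2}, the diagonal entries of $\Phi_\eta$ are $\cos\theta$ and the off-diagonal entries are $\pm\sin\theta$. This already shows that $\Phi_\eta$ is a rotation by angle $\theta$ of magnitude $2\arctan\frac{|\eta\omega|}{2}$, consistent with the orthogonality established in Theorem~\ref{thm:bounded}.

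\textbf{Step 3 (orientation).} This step is the main obstacle. The entries of \eqref{eq:updatein2} place $+\sin\theta$ in position $(1,2)$ and $-\sin\theta$ in position $(2,1)$. In the coordinate order $(x_1,x_2)$ this is the transpose of $R(\theta)$. To obtain exactly $R(\theta)$ as stated, I will use the coordinate order in which the pair is written with $x_2$ first, i.e.\ the planar orientation for which $G$ reads $\begin{pmatrix}0&-\omega\\ \omega&0\end{pmatrix}$. Conjugating by the swap permutation $P$ ($P=P^\top=P^{-1}$) sends the matrix in \eqref{eq:updatein2} to
\[
P\Phi_\eta P=\frac{1}{1+s^2}\begin{pmatrix}1-s^2&-2s\\ 2s&1-s^2\end{pmatrix}=R(\theta).
\]
Combined with Step 1, this yields $x^{t+1}-x^\star=R(\theta)(x^t-x^\star)$ in that orientation.

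I will redo the $2\times2$ inversion $\begin{pmatrix}1&-s\\ s&1\end{pmatrix}^{-1}=\frac{1}{1+s^2}\begin{pmatrix}1&s\\ -s&1\end{pmatrix}$ and the product explicitly, so the sign bookkeeping is airtight. I will also state the orientation convention explicitly in the write-up. In the unswapped order $(x_1,x_2)$ the same matrix is $R(\theta)^\top=R(-\theta)$, so there the stated identity holds only with $\theta$ replaced by $-\theta$.
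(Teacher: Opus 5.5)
Your route is the paper's route. You center at $x^\star$ to get $z^{t+1}=\Phi_\eta z^t$, then read off a rotation from the closed form \eqref{eq:updatein2} using the half-angle formulas, and your computations are correct. Your Step~3 is actually more careful than the paper's proof, which asserts $\Phi_\eta=R(\theta)$ directly from \eqref{eq:updatein2}. That assertion is off by a sign: with $s=\eta\omega/2$,
\[
\Phi_\eta=\frac{1}{1+s^2}\begin{pmatrix}1-s^2 & 2s\\ -2s & 1-s^2\end{pmatrix}=R(\theta)^\top=R(-\theta),
\]
exactly as you found.

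What has to go is the permutation rescue in Step~3, because it does not prove the stated theorem. The statement fixes the coordinate order itself. The point $x^\star=(b_2/\omega,-b_1/\omega)$ solves $Gx^\star+c=0$ only in the order $(x_1,x_2)$, matching the paper's stacked $x=(x_1,x_2)$; in the swapped order the same point reads $(-b_1/\omega,b_2/\omega)$. In that order the displayed identity is false for every $x^t\neq x^\star$. Since $\eta>0$ and $\omega\neq 0$, we have $\theta\in(-\pi,\pi)\setminus\{0\}$, so $\sin\theta\neq 0$ and $R(\theta)-R(-\theta)$ is invertible.

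A concrete check: take $\omega=2$, $\eta=1$, $b_1=b_2=0$, $x^0=(1,0)$. Then $\theta=\pi/2$, and solving \ref{eq:IMGD} directly gives $x^1=(0,-1)$, while $R(\pi/2)x^0=(0,1)$. Conjugating by $P$ therefore proves a different statement, and no argument can prove the formula as written.

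The correct write-up states and proves $x^{t+1}-x^\star=R(-\theta)(x^t-x^\star)$. This is a rotation about $x^\star$ by $|\theta|=2\arctan(|\eta\omega|/2)$, clockwise when $\omega>0$, consistent with the continuous flow $e^{tG}=R(-\omega t)$. The qualitative claim survives, and nothing downstream breaks: Theorem~\ref{thm:converge2} uses only orthogonality and $\|I+R(\theta)\|/2=\cos(\theta/2)$, which is even in $\theta$.
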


\begin{proof}
Define the shifted variable $z^t=x^t-x^\star$, we have
\[
z^{t+1}
= x^{t+1} - x^\star
= \Phi_\eta x^t + D - (\Phi_\eta x^\star + D)
= \Phi_\eta (x^t - x^\star)
= \Phi_\eta z^t.
\]
From (\ref{eq:updatein2}), we  get
\[
\theta = 2\arctan\!\left(\frac{\eta \omega}{2}\right),\quad \Phi_\eta=R({\theta})
=
\begin{pmatrix}
\cos\theta & -\sin\theta \\
\sin\theta & \cos\theta
\end{pmatrix},
\]
and $R(\theta)$ is a planar rotation matrix. So
\begin{align*}&
z^{t+1}=\Phi_\eta z^t=R(\theta)z^t  
\\
\Leftrightarrow\ \ &
x^{t+1}-x^\star
=
R(\theta)\bigl(x^t-x^\star\bigr)
\end{align*}
and every iterate is obtained from the previous by a planar rotation 
around the Nash equilibrium $x^\star$ by the angle $\theta$.
\end{proof}
Figure~\ref{fig:rotation} illustrates this geometric interpretation. 
Further, as $\eta \to \infty$,  the rotation approaches $\pi$ as depicted in Figure~\ref{fig:largreta}. 
Formally, $|\theta| = |2\arctan\!\left(\frac{\eta \omega}{2}\right)| \to \pi$.

\begin{figure}[ht]
\centering

\begin{subfigure}[t]{0.48\textwidth}
\centering
\begin{tikzpicture}[scale=2.2]

% axes
\draw[->] (-0.2,0) -- (1.3,0) node[right] {$x_1$};
\draw[->] (0,-0.2) -- (0,1.3) node[above] {$x_2$};

% Nash equilibrium at the origin
\fill (0,0) circle (0.4pt);
\node[below left] at (0,0) {$x^\star$};

% quarter circle trajectory
\draw[dashed] (1,0) arc (0:90:1);

% vectors
\draw[->, thick, blue] (0,0) -- (1,0)
node[above right] {$x^t$};

\draw[->, thick, red] (0,0) -- (0.87,0.5)
node[right] {$x^{t+1}$};

\draw[->, thick, red] (0,0) -- (0.5,0.87)
node[above right] {$x^{t+2}$};

\draw[->, thick, red] (0,0) -- (0,1)
node[left] {$x^{t+3}$};

% angle markings
\draw (0.3,0) arc (0:30:0.3);
\node at (0.33,0.12) {$\theta$};

\draw (0.35,0) arc (0:60:0.35);
\node at (0.18,0.33) {$2\theta$};

\draw (0.4,0) arc (0:90:0.4);
\node at (0.05,0.45) {$3\theta$};

\end{tikzpicture}

\caption{\ref{eq:IMGD} update rotates around the Nash equilibrium with constant angular velocity
$\theta=2\arctan(\eta\omega/2)$.}
\label{fig:rotation}
\end{subfigure}
\hfill
\begin{subfigure}[t]{0.48\textwidth}
\centering
\begin{tikzpicture}[scale=1.4]

% axes
\draw[->] (-1.2,0) -- (1.2,0) node[right] {$x_1$};
\draw[->] (0,-1.2) -- (0,1.2) node[above] {$x_2$};

% invariant circle
\draw[dashed, gray] (0,0) circle (1);

% points on circle
\coordinate (xk) at (0.8,0.6);
\coordinate (xk1) at (-0.78,-0.62);
\coordinate (xk2) at (0.75,0.66);

\fill (xk) circle (0.015);
\fill (xk1) circle (0.015);
\fill (xk2) circle (0.015);

\node[above right] at (xk) {$x^t$};
\node[below left] at (xk1) {$x^{t+1}$};
\node[above left] at (xk2) {$x^{t+2}$};

\draw[thick, red, ->] (xk) -- (xk1);
\draw[thick, red, ->] (xk1) -- (xk2);

% drift arc
\draw[->, dashed]
    (xk) to[bend left=25] (xk2);

% annotation in upper-right corner
\node[] at (1.95,1.05)
    {\scriptsize $x^{t+2}-x^t$
     \scriptsize $\in O(1/\eta)$};

\draw[->, dashed]
    (1.0,0.95) -- (0.78,0.67);

\fill (0,0) circle (0.4pt);
\node[below right] at (0,0) {$x^\star$};

\end{tikzpicture}

\caption{As $\eta\to\infty$, the rotation angle approaches $\pi$, $|\theta| = |2\arctan\!\left(\frac{\eta \omega}{2}\right)| \to \pi$.}
\label{fig:largreta}
\end{subfigure}

\caption{Geometric interpretation of \ref{eq:IMGD} in the two-dimensional bilinear game.}
\label{fig:twodim_geometry}
\end{figure}

\subsection{Convergence After One Iteration in 2D}

We can directly connect the angular movement to ergodic convergence rates after a single iteration. 
Specifically, since the rotation approaches $\pi$ as $\eta\to \infty$, the dynamics approximately start from the starting location in the second iteration.  
We show that larger learning rates drive the two-step average iterate closer to the Nash equilibrium, with approximation error proportional to $\frac{1}{\eta}$:
\begin{theorem}\label{thm:converge2} Suppose there is a Nash equilibrium and let $x^*=P(x^0)$ be the Nash equilibrium closest to $x^0$. For $\bar x^{2}
=
\frac{x^0+x^1}{2}$, $\|\bar{x}^2-x^\star\|\in O\!\left(\frac{1}{\eta}\right)$.
\end{theorem}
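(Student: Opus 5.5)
Theorem~\ref{thm:converge2} sits in Section 4.3, so I read it as the 2D statement: $G=\begin{pmatrix}0&\omega\\-\omega&0\end{pmatrix}$. I will handle $\omega\neq 0$ and $\omega=0$ separately.

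\emph{Case $\omega\neq 0$.} The Nash equilibrium $x^\star$ is unique, so $x^\star=P(x^0)$. By Theorem~\ref{thm:rotation}, with $z^0=x^0-x^\star$,
\[
\bar x^2-x^\star=\tfrac12\bigl(z^0+R(\theta)z^0\bigr)=\tfrac12\bigl(I+R(\theta)\bigr)z^0 .
\]
The matrix $I+R(\theta)$ is $2\cos(\theta/2)$ times a rotation by $\theta/2$, because
\[
I+R(\theta)=\begin{pmatrix}1+\cos\theta&-\sin\theta\\ \sin\theta&1+\cos\theta\end{pmatrix}=2\cos\tfrac{\theta}{2}\,R\!\left(\tfrac{\theta}{2}\right).
\]
Hence
\[
\|\bar x^2-x^\star\|=\bigl|\cos\tfrac{\theta}{2}\bigr|\,\|x^0-x^\star\|.
\]

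Next I use $\theta/2=\arctan(\eta\omega/2)$. This gives
\[
\bigl|\cos\tfrac{\theta}{2}\bigr|=\frac{1}{\sqrt{1+\eta^2\omega^2/4}}\le \frac{2}{\eta|\omega|}.
\]
Therefore
\[
\|\bar x^2-x^\star\|\le \frac{2\|x^0-x^\star\|}{|\omega|\,\eta}\in O(1/\eta).
\]
An alternative route avoids trigonometry. From the IMGD identity $x^1-x^0=\eta\bigl(G\bar x^2+c\bigr)=\eta G(\bar x^2-x^\star)$ we get $\|G(\bar x^2-x^\star)\|\le 2\|x^0-x^\star\|/\eta$ by Theorem~\ref{thm:bounded}. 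In 2D with $\omega\neq0$, $G$ is $|\omega|$ times an orthogonal matrix, so dividing by $|\omega|$ gives the same bound.

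\emph{Case $\omega=0$.} Here $G=0$, and a Nash equilibrium exists only if $c=0$. Then $x^1=x^0=P(x^0)$, so $\bar x^2=x^\star$ exactly.

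\textbf{Main subtlety.} The only delicate point is that the constant in $O(1/\eta)$ depends on $|\omega|$ and on $\|x^0-x^\star\|$. This dependence is exactly what Theorem~\ref{thm:2TimeAverage} needs: that theorem measures the residual $\|G\bar x^2+c\|$, and there the factor $|\omega|$ cancels. For the higher-dimensional extension I would apply the $\omega\neq0$ bound blockwise after the decomposition of Theorem~\ref{thm:decomposition}. The zero-eigenvalue coordinates are stationary, and the blocks are orthogonal, so the per-block bounds combine with $\min_j|\omega_j|$ as the constant.
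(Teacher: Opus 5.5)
Your proof is correct and follows the paper's argument: you treat the degenerate case $\omega=0$ separately (there $c=0$ and $x^1=x^0=P(x^0)$), and in the rotation case you use $I+R(\theta)=2\cos(\theta/2)R(\theta/2)$ and $\cos(\arctan(\eta\omega/2))=(1+\eta^2\omega^2/4)^{-1/2}$, while also making the constant $2\|x^0-x^\star\|/(|\omega|\eta)$ explicit. Your trig-free alternative is also valid and worth keeping: the identity $x^1-x^0=\eta(G\bar x^2+c)$ together with Theorem~\ref{thm:bounded} gives $\|G\bar x^2+c\|\le 2\|x^0-x^\star\|/\eta$ in every dimension, which proves Theorem~\ref{thm:2TimeAverage} directly without the Schur decomposition.
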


\begin{proof}
If there are multiple equilibria, then the payoff matrix cannot be invertible, implying $\omega=0$, $G$ is a matrix of zeroes, $x^*=x^0=x^t$ for all $t$ and the result trivially holds. 

Next, consider when there is a unique Nash equilibria $x^*=(b_2/\omega, -b_1/\omega)$. 
Using the shifted variable
$z^t=x^t-x^\star$,
we get
\[
\bar{x}^2-x^\star
=
\frac{(x^0-x^\star)+(x^{1}-x^\star)}{2}
=
\frac{z^0+z^{1}}{2}.
\]
As in the proof of Theorem \ref{thm:rotation}, $z^{1}=R(\theta)z^0$ and
\[
\bar{x}^2-x^\star
=
\frac{(I+R(\theta))z^0}{2}.
\]
Taking the $l^2$-norm,
\[
\left\|\frac{z^0+z^{1}}{2}\right\|
=
\left\|
\frac{(I+R(\theta))z^0}{2}
\right\|
\le
\frac{\|I+R(\theta)\|}{2}\,\|z^0\|.
\]
Using
\[
I+R(\theta)
=
2\cos\!\left(\frac{\theta}{2}\right)
R\!\left(\frac{\theta}{2}\right),
\]
and the fact that \(R(\theta/2)\) is orthogonal, we obtain
\[
\left\|\frac{I+R(\theta)}{2}\right\|
=
\cos\!\left(\frac{\theta}{2}\right).
\]
Since \(\theta = 2\arctan(\eta\omega/2)\),
\[
\cos\!\left(\frac{\theta}{2}\right)
=
\cos\!\left(\arctan\!\left(\frac{\eta\omega}{2}\right)\right)
=
\frac{1}{\sqrt{1+\frac{\eta^2\omega^2}{4}}},
\]
which implies
\[
\left\|\frac{I+R(\theta)}{2}\right\|
=
\frac{1}{\sqrt{1+\frac{\eta^2\omega^2}{4}}}
\in O\!\left(\frac{1}{\eta}\right).
\]
Therefore, $\|\bar{x}^2-x^\star\|=\| \bar{z}^2\| \in O\!\left(\frac{1}{\eta}\right)$.
\end{proof}

\subsection{Convergence After One Iteration in Arbitrary Dimension}

Combining Theorems 4.1 and 4.3, we show \ref{eq:IMGD} obtains time-average convergence after 2 iterations in higher dimensions.

\begin{proof}[Proof of Theorem \ref{thm:2TimeAverage}.]
Let \(G = QJQ^\top\) be the real Schur decomposition of \(G\),  $y^t = Q^\top x^t$ as in Theorem \ref{thm:decomposition}, and 
$\bar y^2 = \frac{y^1+y^2}{2}$.
Since \(Q\) is orthogonal,
\[
\|G\bar x^2+c\|^2
=
\|J\bar y^2+Q^\top c\|^2
=
\sum_{j=1}^{k}
\bigl[(J\bar y^2+Q^\top c)_j\bigr]^2.
\]

By Theorem~\ref{thm:decomposition}, for $ j>2\ell$ 
\[
(Q^\top c)_j=0,
\quad y_j^{t+1}=y_j^t.
\]
Since the last \(k-2\ell\) rows of \(J\) are zero,
\[
(J\bar y^2+Q^\top c)_j=0,
\qquad j>2\ell.
\]
Therefore,
\[
\|G\bar x^2+c\|^2
=
\sum_{j=1}^{2\ell}
\bigl[(J\bar y^2+Q^\top c)_j\bigr]^2.
\]

From Theorem~\ref{thm:decomposition}, the dynamics decompose into
\(\ell\) independent two-dimensional subsystems for
\(j=1,\ldots,\ell\):
\[
\begin{aligned}
y_{2j-1}^{t+1}
&=
y_{2j-1}^{t}
+
\eta\left(
\omega_j\frac{y_{2j}^{t}+y_{2j}^{t+1}}{2}
+
(Q^\top c)_{2j-1}
\right),\\
y_{2j}^{t+1}
&=
y_{2j}^{t}
+
\eta\left(
-\omega_j\frac{y_{2j-1}^{t}+y_{2j-1}^{t+1}}{2}
+
(Q^\top c)_{2j}
\right).
\end{aligned}
\]
This is precisely the two-dimensional \ref{eq:IMGD} system considered in Theorem~\ref{thm:converge2}. Hence,
\[
\left\|
J(\omega_j){\bar y^{2(j)}}
+
\begin{pmatrix}
(Q^\top c)_{2j-1}\\
(Q^\top c)_{2j}
\end{pmatrix}
\right\|
\in
O\!\left(\frac{1}{\eta}\right).
\]

Squaring both sides yields
\[
[(J\bar y^2+Q^\top c)_{2j-1}]^2
+
[(J\bar y^2 +Q^\top c)_{2j}]^2
\in
O\!\left(\frac{1}{\eta^2}\right).
\]

Summing over all \(\ell\) blocks,
\[
\|G\bar x^2+c\|^2
\in
\sum_{j=1}^{\ell}
O\!\left(\frac{1}{\eta^2}\right)
\in
O\!\left(\frac{1}{\eta^2}\right),
\]
where the last equality follows from the fact that \(\ell\) is independent of \(\eta\). Taking square roots gives
\[
\|G\bar x^2+c\|
\in
O\!\left(\frac{1}{\eta}\right).
\]
\end{proof}

\section{Numerical Experiments}
In this section, we evaluate the performance of \ref{eq:IMGD} by comparing it with two other standard rules, Alternating Gradient Descent (AGD) and Optimistic Gradient Descent (OGD). 
In Section 5.1, we first discuss the choice of learning rate and run several experiments for \ref{eq:IMGD} as it has no limits for $\eta$. 
Then in Section 5.2, we run these three methods together for both fixed iterations and fixed time, and compare the outcomes. We don't run dedicated experiments for analyzing $\bar x^2$ since there's no baseline, as \ref{eq:IMGD} is the only known method to yield arbitrarily good approximations in a single iteration; however, the two-step time-average convergence still naturally appears in our analysis of IMGD for large learning rates.

For our benchmark methods, AGD is a classical gradient-based method as introduced in Section~\ref{sec:GradientBased}, and OGD is also a widely studied first-order method for zero-sum games, which incorporates information from the previous gradient to anticipate the rotational behavior of game dynamics:
\begin{equation}\begin{aligned}
x_1^{t+1}
&=
x_1^t
+
2\eta \nabla_{x_1}u_1(x_1^t,x_2^t)
-
\eta \nabla_{x_1}u_1(x_1^{t-1},x_2^{t-1}),
\\
x_2^{t+1}
&=
x_2^t
+
2\eta \nabla_{x_2}u_2(x_1^t,x_2^t)
-
\eta \nabla_{x_2}u_2(x_1^{t-1},x_2^{t-1}).
\end{aligned}\tag{OGD}\end{equation}

For settings in the experiments, we solve a two-player zero-sum game:
\begin{align*}
\max_{x_1 \in X_1} \; \min_{x_2 \in X_2} 
\; x_1^\top A x_2 + b_1 \cdot x_1 - b_2 \cdot x_2,
\end{align*}
where $x_1 \in X_1 \subset \mathbb{R}^{k_1}$ , $x_2 \in X_2 \subset \mathbb{R}^{k_2}$, $A \in \mathbb{R}^{k_1 \times k_2}$ and $b_1 \in \mathbb{R}^{k_1}$, 
$b_2 \in \mathbb{R}^{k_2}$ are fixed vectors. The payoff matrix $A \in \mathbb{R}^{k_1 \times k_2}$ has entries independently sampled from the uniform distribution $[-1,1]$, while the vectors $b_1 \in \mathbb{R}^{k_1}$ and $b_2 \in \mathbb{R}^{k_2}$ are independently drawn from the standard normal distribution. The problem dimensions are set as $k_1 = k_2 = 20$, giving a total dimension $k = 40$. 

All numerical experiments were conducted on a Lenovo ThinkPad L14 Gen 5 laptop equipped with an AMD Ryzen 5 PRO 7535U processor (2.90 GHz) and 32 GB of RAM, running a 64-bit Windows operating system. 

\subsection{Choice of Learning Rate}

The theoretical learning rate bounds for OGD and AGD respectively are $\eta_{\text{OGD}} \le 1/2 \|A\| $ \cite{mokhtari2020convergence} and $\eta_{\text{AGD}} < 2/\|A\| $  \cite
{bailey20261}, where  \(\|A\|\) denotes the spectral norm of \(A\). We have matrix $A \in \mathbb{R}^{k_1 \times k_2}$ sampled from the uniform distribution $[-1,1]$, so $||A||\leq k_1 $ ($k_1 = k_2$) \cite{bailey20261}.
Following the theoretical step-size scaling, we parameterize the learning rates as
\(
\eta_{\mathrm{AGD}} = c\,\frac{2}{k_1}, \quad
\eta_{\mathrm{OGD}} = c\,\frac{1}{2k_1},
\) where \(c \in \{2^{0},2^{-1},2^{-2},\ldots\}\). We perform a parameter sweep over these values, with the complete tuning results reported in Appendix~\ref{appendix:parameter_tuning}. Among the tested values, we find that \(c=2^0\) consistently yields the best empirical performance. Therefore, we set the learning rate to be 
\[
\eta_{\mathrm{AGD}}=\frac{2}{k_1}, \qquad
\eta_{\mathrm{OGD}}=\frac{1}{2k_1}.
\]

For \ref{eq:IMGD}, there is no explicit upper bound on the learning rate. To select a suitable step size, we similarly parameterize it as $\eta_\mathrm{IMGD} = \frac{c}{k_1}$
where \( c > 0 \) is a tunable scalar. 
We consider a wide range of constants $
c \in \{1, 3, 5, 10, 20, 50, 80, 10^5, 10^9, 10^{16}\}$,
 and repeat each experiment 30 times. At each iteration, we compute the time-averaged iterate and
 record the distance to the set of Nash equilibria, \( \|G\bar{x}^T +c\| \), (in log scale).

 The following figures present the convergence curves for 1000 and 10 iterations. Figure~\ref{fig:learningrate4} is highly oscillatory over the \(1000\)-iteration horizon, and Figure~\ref{fig:learningrate5} clearly reveals the underlying convergence behavior in \(10\) iterations.

\begin{figure}[ht]
\centering

\begin{subfigure}[t]{0.48\textwidth}
    \centering
    \includegraphics[width=\textwidth]{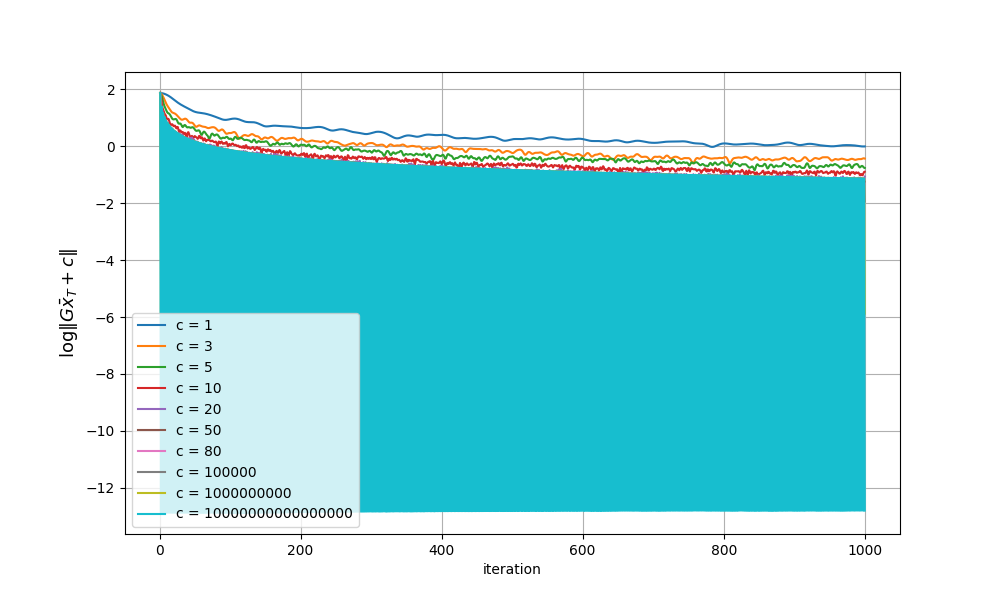}
    \caption{Time-Average Error vs Iteration for 1000 iterations}
    \label{fig:learningrate4}
\end{subfigure}
\hfill
\begin{subfigure}[t]{0.48\textwidth}
    \centering
    \includegraphics[width=\textwidth]{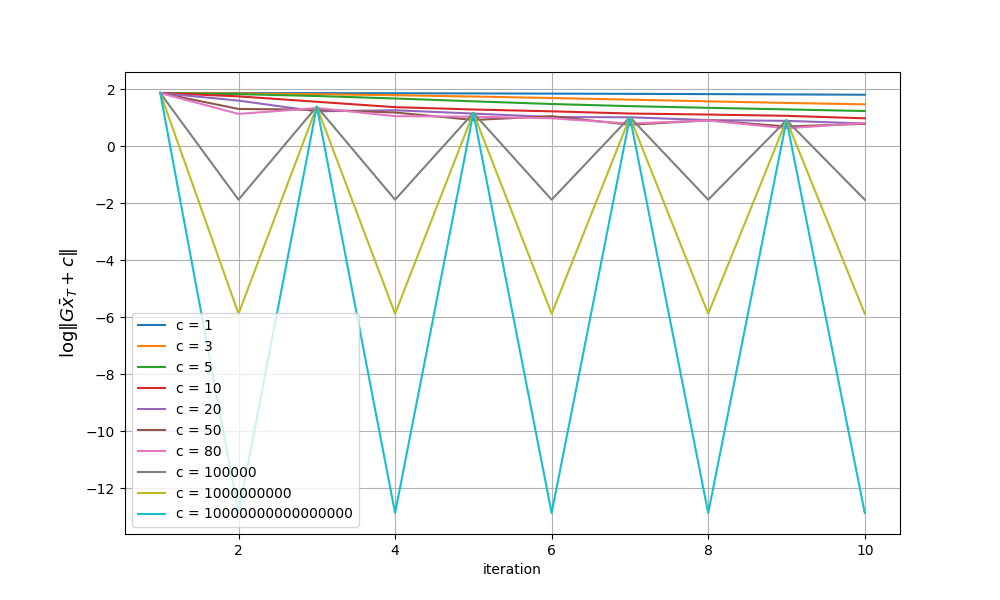}
    \caption{Time-Average Error vs Iteration for 10 iterations}
    \label{fig:learningrate5}
\end{subfigure}

\caption{Time-average error vs iteration under different iteration budgets.}
\label{fig:learningrate3}
\end{figure}

 From Figure \ref{fig:learningrate3}, it is noticeable that \ref{eq:IMGD} converges after 2 iterations for an extremely large learning rate, as indicated by Theorem \ref{thm:2TimeAverage}. However, these good estimates also make it difficult to inspect the worst-case convergence.
Thus, we examine long-term behavior by replacing each point with the maximum over future iterations to suppress oscillations and highlight convergence trends. Running the algorithm for 1000 iterations each time, the result is shown in Figure~\ref{fig:learningrate1}.

\begin{figure}[h!]
    \centering
    \includegraphics[width=0.7\textwidth]{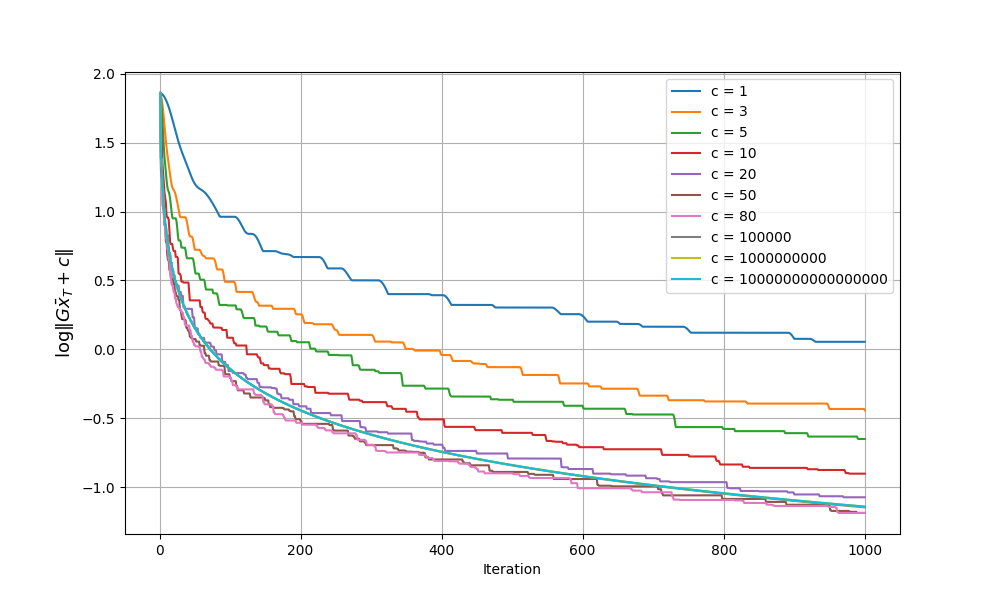}
    \caption{Smoothed Time-Average Error vs Iteration for 1000 iterations}
    \label{fig:learningrate1}
\end{figure}

As \( c \) increases, we observe that the convergence becomes faster, and very large step sizes (e.g., \( c \ge 10^5 \)) still exhibit stable behavior and achieve substantially lower error. This highlights a key advantage of \ref{eq:IMGD}: it can tolerate, and even benefit from, very large step sizes, unlike existing methods.

As indicated by Theorem \ref{thm:converge}, the improvement saturates beyond a certain point, and extremely large values of \( c \) provide diminishing returns while potentially introducing numerical sensitivity. Therefore, for a fair and representative comparison across methods, we select \( c = 10 \), as shown in Figure~\ref{fig:learningrate1}, which achieves a good balance between convergence speed, stability, and robustness.
So we  set the learning rate of \ref{eq:IMGD} to be
\[
\eta_{\mathrm{IMGD}} = \frac{10}{k_1}.
\]

\subsection{Experiments}

In the experiments, we evaluate the convergence behavior when all algorithms are run for a fixed number of iterations or time. 

\subsubsection{Fixed Iteration}
We set the initial point \( x^0 \in \mathbb{R}^n \) to be sampled uniformly from \([-1,1]^n\). All algorithms are initialized from the same point within each trial to ensure a fair comparison. Each algorithm is run for a fixed number of iterations \(T = 1000,2000,3000\). We evaluate performance using the time-averaged iterate and measure its distance to a Nash equilibrium $x^\star$. We repeat the experiment over 30 independent runs with different random initializations. 
Consistent with our theoretical analysis, we measure the distance to set of Nash equilibria with $||G\bar{x}^T+c||$ and report a 95\% confidence interval for this measured deviation from Nash.

\begin{figure*}[h!]
    \centering
    \includegraphics[width=.5\textwidth]{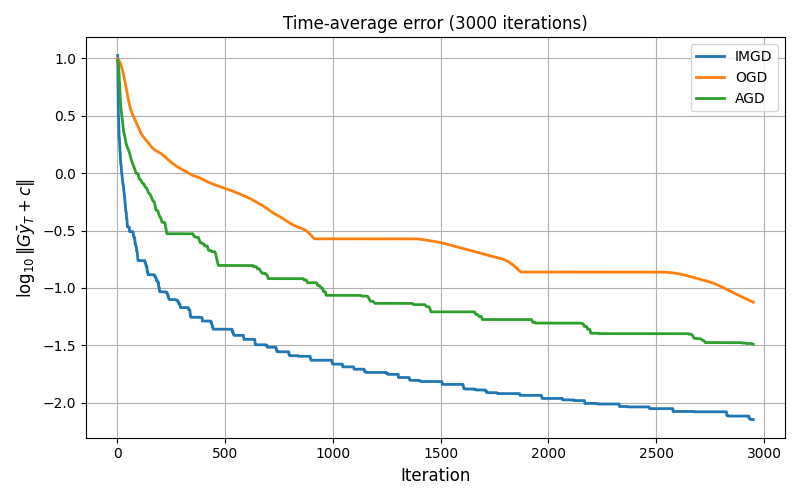}

    \caption{Smoothed time-average error vs iteration for IMGD, OGD, and AGD over 3000 iterations.}
    \label{fig:comparison_iterations}
\end{figure*}

\begin{table}[!ht]
\centering
\caption{Performance under different iteration budgets}
\label{tab:num_iters_all}

\resizebox{\textwidth}{!}{%
\begin{tabular}{lcc|cc|cc}
\toprule
& \multicolumn{2}{c}{\textbf{1000 iters}}
& \multicolumn{2}{c}{\textbf{2000 iters}}
& \multicolumn{2}{c}{\textbf{3000 iters}} \\
\cmidrule(lr){2-3} \cmidrule(lr){4-5} \cmidrule(lr){6-7}

\textbf{Algorithm}
& \textbf{Mean $||G\bar{x}^T+c||$} & \textbf{95\% CI}
& \textbf{Mean $||G\bar{x}^T+c||$} & \textbf{95\% CI}
& \textbf{Mean $||G\bar{x}^T+c||$} & \textbf{95\% CI} \\
\midrule

IMGD
& 0.0198 & [0.0159, 0.0228]
& 0.0092 & [0.0079, 0.0103]
& 0.0050 & [0.0044, 0.0059] \\
\midrule

OGD
& 0.2391 & [0.1847, 0.2990]
& 0.0864 & [0.0756, 0.1011]
& 0.0745 & [0.0569, 0.0918] \\
\midrule

AGD
& 0.0651 & [0.0510, 0.0770]
& 0.0394 & [0.0330, 0.0450]
& 0.0236 & [0.0195, 0.0274] \\
\bottomrule
\end{tabular}
}
\end{table}

\begin{table}[!ht]
\centering
\caption{Relative distance to IMGD under different iteration budgets}
\label{tab:relative_results_all}

\resizebox{\textwidth}{!}{%
\begin{tabular}{lcc|cc|cc}
\toprule
& \multicolumn{2}{c}{\textbf{1000 iters}}
& \multicolumn{2}{c}{\textbf{2000 iters}}
& \multicolumn{2}{c}{\textbf{3000 iters}} \\
\cmidrule(lr){2-3} \cmidrule(lr){4-5} \cmidrule(lr){6-7}

\textbf{Algorithm}
& \textbf{Mean Ratio} & \textbf{95\% CI}
& \textbf{Mean Ratio} & \textbf{95\% CI}
& \textbf{Mean Ratio} & \textbf{95\% CI} \\
\midrule

AGD / IMGD
& 3.1789 & [2.7927, 3.5474]
& 4.2912 & [3.9247, 4.8630]
& 4.6048 & [3.7083, 5.2138] \\
\midrule

OGD / IMGD
& 11.3698 & [9.4506, 13.8848]
& 9.0739 & [7.9007, 10.1857]
& 13.9697 & [11.3211, 16.3920] \\
\bottomrule
\end{tabular}
}
\end{table}

From Figure~\ref{fig:comparison_iterations} and Table~\ref{tab:num_iters_all}, across all iteration budgets, \ref{eq:IMGD} consistently achieves the smallest distance to equilibrium and exhibits steady improvement as the number of iterations increases. In contrast, both OGD and AGD converge more slowly and show significantly larger residual errors.

As the number of iterations increases from 1000 to 3000, the performance gap becomes more pronounced according to Table~\ref{tab:relative_results_all}. The relative distance ratios indicate that AGD is approximately $3\sim 5\times$ worse than \ref{eq:IMGD}, while OGD can be more than an order of magnitude worse, reaching over $10\times$ at 3000 iterations. These results highlight the superior stability and convergence efficiency of \ref{eq:IMGD} in this setting.

\subsubsection{Fixed Time}

We also evaluate all algorithms under a fixed wall-clock time budget. Specifically, each method is allowed to run for $t=0.05s , 0.2s, 0.5s$, and the number of iterations is determined dynamically within the time constraint. For \ref{eq:IMGD}, we include the preprocessing time required to construct the matrices used in each iteration within the total time budget. This ensures a fair comparison by accounting for both setup cost and iterative computation.

Other settings are the same as fixed iteration. Performance is reported as absolute value of the time-averaged iterate gradient, together with a 95\% confidence interval. Results are shown in Figure~\ref{fig:comparison_times}, Table ~\ref{tab:time_budget_absolute} and Table ~\ref{tab:time_budget_relative}.

\begin{figure*}[h!]
    \centering

        \includegraphics[width=0.5\textwidth]{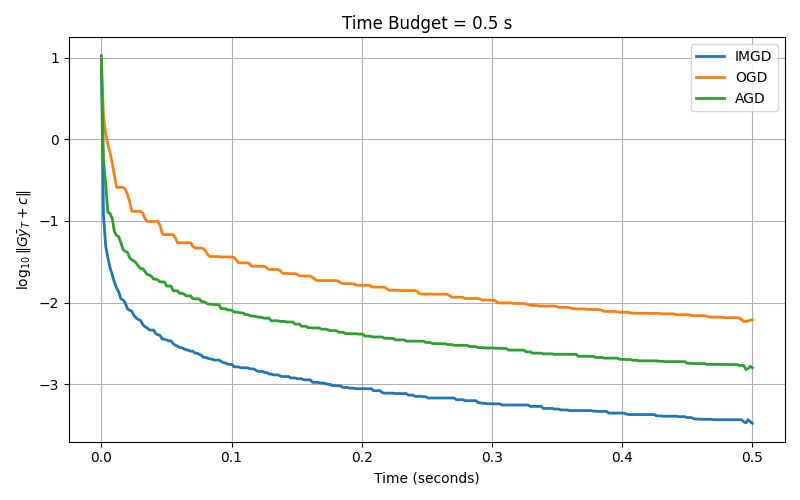}

    \caption{Smoothed time-average error vs iteration for IMGD, OGD, and AGD as a function of time spent.}
    \label{fig:comparison_times}
\end{figure*}

\begin{table}[!ht]
\centering
\caption{Absolute performance under different time budgets}
\label{tab:time_budget_absolute}

\resizebox{\textwidth}{!}{%
\begin{tabular}{lcc|cc|cc}
\toprule
& \multicolumn{2}{c}{\textbf{0.05 s}}
& \multicolumn{2}{c}{\textbf{0.2 s}}
& \multicolumn{2}{c}{\textbf{0.5 s}} \\
\cmidrule(lr){2-3} \cmidrule(lr){4-5} \cmidrule(lr){6-7}

\textbf{Algorithm}
& \textbf{Mean $||G\bar{x}^T+c||$} & \textbf{95\% CI}
& \textbf{Mean $||G\bar{x}^T+c||$} & \textbf{95\% CI}
& \textbf{Mean $||G\bar{x}^T+c||$} & \textbf{95\% CI} \\
\midrule

IMGD
& 0.000803 & [0.000571, 0.001078]
& 0.000202 & [0.000132, 0.000248]
& 0.000080 & [0.000058, 0.000113] \\
\midrule

OGD
& 0.022246 & [0.016744, 0.035808]
& 0.005349 & [0.004471, 0.006599]
& 0.002123 & [0.001748, 0.002482] \\
\midrule

AGD
& 0.003698 & [0.002714, 0.005236]
& 0.000985 & [0.000703, 0.001254]
& 0.000383 & [0.000286, 0.000487] \\
\bottomrule
\end{tabular}
}
\end{table}

\begin{table}[!ht]
\centering
\caption{Relative performance versus IMGD under different time budgets}
\label{tab:time_budget_relative}

\resizebox{\textwidth}{!}{%
\begin{tabular}{lcc|cc|cc}
\toprule
& \multicolumn{2}{c}{\textbf{0.05 s}}
& \multicolumn{2}{c}{\textbf{0.2 s}}
& \multicolumn{2}{c}{\textbf{0.5 s}} \\
\cmidrule(lr){2-3} \cmidrule(lr){4-5} \cmidrule(lr){6-7}

\textbf{Algorithm}
& \textbf{Mean Ratio} & \textbf{95\% CI}
& \textbf{Mean Ratio} & \textbf{95\% CI}
& \textbf{Mean Ratio} & \textbf{95\% CI} \\
\midrule

AGD / IMGD
& 4.7479 & [3.1832, 8.0275]
& 5.0021 & [3.1769, 7.6729]
& 4.9638 & [3.0446, 8.1713] \\
\midrule

OGD / IMGD
& 28.8080 & [16.9793, 56.1556]
& 27.0318 & [20.3495, 40.8056]
& 27.6413 & [17.3203, 39.5822] \\
\bottomrule
\end{tabular}
}
\end{table}

Across all time budgets, \ref{eq:IMGD} consistently achieves the smallest distance to equilibrium and shows steady improvement as the time budget increases. AGD and OGD converge more slowly with significantly larger residual errors. Relative performance shows that AGD is approximately $4\sim5\times$ worse than \ref{eq:IMGD}, while OGD can be $20\times$ worse, highlighting the superior stability and efficiency of \ref{eq:IMGD}.

These experiments indicate that, on the tested bilinear instances, IMGD achieves substantially smaller residuals than AGD and OGD under both fixed-iteration and fixed-time budgets.

\section{Discussion}

\subsection{Computational Costs}

The main additional computational cost of IMGD, relative to explicit methods such as AGD and OGD, is a one-time inverse computation. Specifically, to effectively implement IMGD, we precompute $\phi_\eta$ and $D$ in the update $x^{t+1}=\Phi_\eta x^t+D$.

For dense games with total dimension \(k=k_1+k_2\), this initialization has cost \(O(k^3)\) using standard dense linear algebra. After initialization, each IMGD iteration requires a dense matrix-vector multiplication and vector addition, giving per-iteration cost \(\Theta(k^2)\). This is the same per-iteration order as standard dense implementations of AGD and OGD, whose updates also require matrix-vector products involving \(A\), \(A^\top\), or \(G\).

Consequently, the inverse cost is naturally amortized over the run. If an algorithm is run for \(T\) iterations, then the iterative cost is \(\Theta(Tk^2)\), while the one-time inverse cost is \(O(k^3)\). Therefore, in the common regime where \(T\) is significantly larger than \(k\), the initialization cost is asymptotically dominated by the cumulative iteration cost. This is the relevant regime for many online optimization experiments, where the number of gradient updates is typically much larger than the problem dimension.

This comparison is also consistent with our numerical experiments. In the fixed-time experiments, the preprocessing time required by IMGD is included in the total wall-clock budget, so the reported performance already accounts for the inverse computation rather than only the cheaper post-initialization iterations. Despite this accounting, IMGD remains substantially more accurate than AGD and OGD under the tested time budgets.

Finally, the absence of an upper stability restriction on \(\eta\) reduces another source of computational overhead. Methods such as AGD and OGD require learning rates satisfying bounds depending on \(\|G\|\), and practical implementations often require additional tuning or norm estimation. IMGD is well-defined for every \(\eta>0\), so its computational cost should be viewed as a tradeoff: a one-time inverse computation in exchange for learning-rate robustness, exact orbit preservation, and strong empirical performance.

\subsection{Extensions to Normal-Form Games}

As discussed in Section~2.2, the unconstrained bilinear setting serves as a canonical model for the local behavior of learning dynamics in normal-form zero-sum games. Once the equilibrium support is identified, the dynamics are confined to the affine hull of the corresponding face of the simplex. After eliminating the affine equality constraints, the resulting dynamics are governed by an unconstrained bilinear game, so the analysis developed in this paper applies directly on the reduced subspace.

For explicit methods such as AGD and OGD, this reduction is largely conceptual: the update rule itself remains unchanged, while the analysis is performed on the active face. IMGD introduces an additional challenge. The implicit midpoint update requires computing an inverse involving the game matrix. In the constrained setting, this inverse is formed only over the coordinates corresponding to the current support of each player's strategy. Consequently, whenever the support changes, the reduced system changes as well, requiring the inverse to be recomputed.

Recomputing this inverse after every support change would limit the computational advantage obtained by preprocessing the inverse once. This suggests that IMGD is likely to be less effective during the initial exploratory phase of learning, when supports change frequently.

Instead, the structure of IMGD naturally suggests a hybrid approach. A first-stage learning algorithm, such as AGD, OGD, or another support-identification procedure, could be used to rapidly identify the equilibrium support. Once a candidate support has stabilized, IMGD could be initialized on the induced subgame to exploit its favorable geometric properties and rapidly converge while preserving bounded orbits and allowing arbitrarily large learning rates.

This perspective raises several interesting research questions. Can equilibrium support be identified reliably during learning? When is it safe to transition from an exploratory learning dynamic to IMGD? Can support changes be detected cheaply enough that inverse updates remain computationally practical? Addressing these questions may lead to learning algorithms that combine the fast exploratory behavior of explicit methods with the strong stability guarantees of structure-preserving implicit dynamics.

\section{Conclusion}

We introduced implicit midpoint gradient descent (IMGD), an implicit update rule for unconstrained bilinear two-player zero-sum games. The method is obtained by applying the implicit midpoint integrator to the continuous-time gradient dynamics, yielding a discrete-time learning rule that preserves the underlying energy structure exactly. We proved that IMGD preserves the Euclidean distance to every Nash equilibrium, has bounded orbits for every learning rate \(\eta>0\), and achieves \(O(1/T)\) ergodic convergence without an upper stability restriction on the step size.

We also showed that the dynamics decompose into independent two-dimensional rotations under the real Schur decomposition of the skew-symmetric game matrix. This geometric viewpoint explains the behavior of the method in the large-learning-rate regime and shows that a two-step average approaches the Nash residual at rate \(O(1/\eta)\). Numerical experiments further demonstrate that IMGD can substantially outperform AGD and OGD in both fixed-iteration and fixed-time comparisons on the tested bilinear instances.

Overall, the results suggest that structure-preserving discretizations provide a useful algorithmic principle for zero-sum game dynamics. The method trades a one-time inverse computation for learning-rate robustness and exact orbit preservation, motivating future work on sparse implementations, adaptive learning-rate selection, and support-identification methods for constrained normal-form games.

% \section{Conclusion}
% In this work, we propose a new implicit update rule (\ref{eq:IMGD}) for bilinear two-player zero-sum games that preserves the underlying energy structure of the continuous-time dynamics. We show that \ref{eq:IMGD} has bounded orbits, can achieve quick convergence, and is stable for arbitrary learning rates. 
% We also show that the update rule has a precise geometric interpretation: under the real Schur decomposition, IMGD decomposes into independent planar rotations whose two-step averages approach the Nash at rate \(O(1/\eta)\).
% Numerical experiments further confirm the improved stability and performance compared to standard explicit methods. Overall, our results highlight the benefits of energy-preserving discretizations and provide new insight into the role of implicit dynamics in saddle-point optimization.

\bibliographystyle{plain}
\bibliography{References}

% \appendix

\section{Proof of Theorem~\ref{thm:multiple}: Time-Average Convergence}

\begin{proof}
    First, observe that the set of Nash equilibria is affine, i.e., 
    if $x^*$ and $x^*+d$ are Nash equilibria, then $x^*+\lambda d$ is a Nash equilibrium for all $\lambda\in \mathbb{R}$: $x^*$ is a Nash equilibrium if and only if $Gx^*=-c$.
    Therefore $Gd=G(x^*+d)-Gx^*=0$ and $G(x^*+\lambda d)= Gx^*+\lambda Gd=Gx^*=c$ implying $x^*+\lambda d$ is also a Nash equilibrium.

    Since the set of Nash equilibria is affine (and therefore convex), there is a unique solution to $\min_{x\in {\cal X}^*} ||x_0-x||$ and $P(x^0)$ is well-defined. 
    
    Since $P(x^0)$ is the closest Nash equilibrium to $x^0$, the points $x^0$, $P(x^0)$, and $P(x^0)\pm d$ form two right angles and $P(x^0)-d$ and $P(x^0)+d$ are equidistant from $x^0$;
    by the Pythagorean theorem,
    % \begin{align*}
        $ ||x^0-(P(x^0)\pm d)||^2
        = ||x^0-P(x^0)||^2 + ||P(x^0)-(P(x^0)\pm d)||^2
        = ||x^0-P(x^0)||^2 + ||d||^2.$
    % \end{align*}
    
    Further, by Theorem \ref{thm:bounded}, the distance to each Nash equilibrium is invariant and therefore $||x^t-(P(x^0)+d)||^2=||x^t-(P(x^0)-d)||^2$ at every time $t\geq 0$. 
    Expanding and rearranging terms yields the equality $d^\intercal x^t=d^\intercal P(x^0)$. 
\end{proof}

\section{Learning Rate Selection for AGD and OGD}
\label{appendix:parameter_tuning}

Consider
\(
\eta_{\mathrm{AGD}} = c\,\frac{2}{k_1}, \quad
\eta_{\mathrm{OGD}} = c\,\frac{1}{2k_1},
\)
where \(c \in \{2^{0},2^{-1},2^{-2},2^{-3},2^{-4}\}\). For each algorithm, we evaluate all candidate values of \(c\) on the experimental
instances used in the paper. The corresponding convergence curves are shown in
Figures~\ref{fig:agd_tuning} and~\ref{fig:ogd_tuning}. In both cases,
\(c=1\) consistently provides the best overall empirical performance. 
\begin{figure}[h!]
    \centering
    \includegraphics[width=0.7\linewidth]{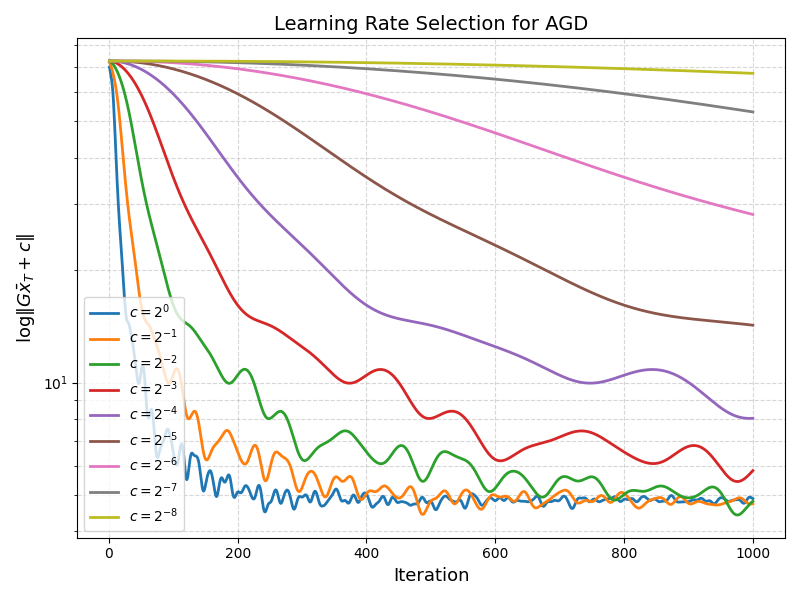}
    \caption{Parameter sweep for AltGD with
    \(\eta_{\mathrm{AGD}} = c\,2/k_1\),
    where \(c\in\{2^{0},2^{-1},2^{-2},...,2^{-8}\}\).}
    \label{fig:agd_tuning}
\end{figure}

\begin{figure}[t]
    \centering
    \includegraphics[width=0.7\linewidth]{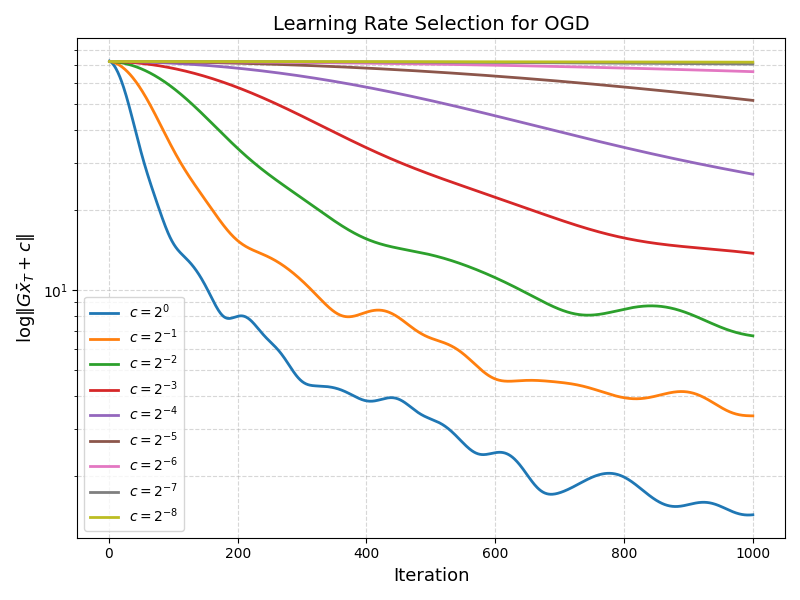}
    \caption{Parameter sweep for OGD with
    \(\eta_{\mathrm{OGD}} = c/(2k_1)\),
    where \(c\in\{2^{0},2^{-1},2^{-2},...,2^{-8}\}\).}
    \label{fig:ogd_tuning}
\end{figure}

\end{document}